\documentclass[12pt,final]{l4dc2023}


\title[Template-Based Piecewise Affine Regression]{Template-Based Piecewise Affine Regression}

\usepackage{times} 

\usepackage{array}
\usepackage{booktabs}
\usepackage{paralist}
\usepackage{tikz}
\usetikzlibrary{positioning,arrows,patterns,calc}
\usepackage{pgfplots}
\pgfplotsset{compat=newest}




\author{%
 \Name{Guillaume O. Berger}  \and \Name{Sriram Sankaranarayanan} \Email{first.lastname@colorado.edu}\\
 \addr University of Colorado Boulder, USA.
}

\newcommand{\calD}{\mathcal{D}}

\newcommand{\calH}{\mathcal{H}}
\newcommand{\calI}{\mathcal{I}}

\newcommand{\calP}{\mathcal{P}}

\newcommand{\calS}{\mathcal{S}}

\newcommand{\calU}{\mathcal{U}}
\newcommand{\calV}{\mathcal{V}}

\newcommand{\calX}{\mathcal{X}}

\newcommand{\chat}{\hat{c}}

\newcommand{\qhat}{\hat{q}}

\newcommand{\epsilonhat}{\hat\epsilon}

\newcommand{\xbar}{\bar{x}}

\renewcommand{\Re}{\mathbb{R}}

\newcommand{\Ne}{\mathbb{N}}

\newcommand{\size}{\mathrm{size}}

\newcommand{\FindSubsets}{\textsc{FindSubsets}}

\newcommand{\cbreak}{\textsc{break}}
\newcommand{\ccontinue}{\textsc{continue}}

\newcommand{\Uid}{U_{\mathrm{id}}}
\newcommand{\infeas}{\mathrm{ic}}

\newtheorem{problem}[theorem]{Problem}

\newenvironment{proofwithname}[1][Name]%
{%
 \par\noindent{\bfseries\upshape #1\ }%
}%
{\jmlrQED}

\begin{document}

\maketitle

\begin{abstract}%
We investigate the problem of fitting piecewise affine functions (PWA) to data.
Our algorithm divides the input domain into finitely many polyhedral regions
whose shapes are specified using a user-defined template such that the data
points in each region are fit by an affine function within a desired error
bound. We first prove that this problem is NP-hard. Next, we present a top-down
algorithm that considers subsets of the overall data set in a systematic manner,
trying to fit an affine function for each subset using linear regression. If
regression fails on a subset, we extract a minimal set of points that led to a
failure in order to split the original index set into smaller subsets. Using a
combination of this top-down scheme and a set covering algorithm, we derive an
overall approach that is optimal in terms of the number of pieces of the
resulting PWA model. We demonstrate our approach on two numerical examples that
include PWA approximations of a widely used nonlinear insulin--glucose
regulation model and a double inverted pendulum with soft contacts.
\end{abstract}

\begin{keywords}%
Piecewise Affine Regression, Hybrid System Identification.
\end{keywords}




\section{Introduction}\label{section:introduction}
Piecewise affine (PWA) regression models a given set of data points consisting of input--output pairs $\{(x_k,y_k)\}_{k=1}^K$ by splitting the input domain into finitely many polyhedral regions $H_1, \ldots, H_q$ and associating each region $H_i$ with an affine function $f_i(x)=A_i x+b_i$.
In this paper, we seek a PWA model that fits the given data while respecting a user-provided error bound $\epsilon$ and minimizing the number of regions. This problem has numerous applications including the identification of hybrid systems with state-based switching and simplifying nonlinear models using PWA approximations.


Existing PWA regression approaches usually do not restrict how the input domain
is split. For instance, an approach that simply specifies that the input space
is covered by polyhedral sets leads to high computational complexity for the
regression algorithm~\citep{lauer2019hybrid}. In this paper, we restrict the
possible shape of the polyhedral regions by requiring that each region $H_i$ is
described by a vector inequality $p(x)\leq c_i$, wherein $p$ is a fixed,
user-defined, vector-valued function, called the \emph{template}, while the
regions are obtained by varying the offset vector $c_i$. The resulting
problem, called template-based PWA regression, allows us to split the input
region into pre-specified shapes such as rectangles, using a suitable template.
Like the classical PWA regression problem~\citep{lauer2019hybrid}, we show that
template-based PWA regression is NP-hard in the dimension of the
input space and the size of the template, but polynomial in the size of the data
set (Section \ref{section:complexity}).

Next, we provide an algorithm for optimal template-based PWA regression (i.e., with minimal number of regions) (Section \ref{section:algorithm}).
The main idea is to examine various subsets of the input data in order to discover maximal subsets that are \emph{compatible}: wherein compatibility of a set of data points simply means that there is an affine function that fits all the points within the desired error tolerance.
Thus, our approach starts to examine subsets of the data starting from the entire data to begin with.
If a given subset is not compatible, we exploit the optimization formulation of the affine regression problem to extract a minimal subset of points that is itself incompatible.
The key observation is that the original set can now be broken up into smaller subsets which can themselves be examined for compatibility.
We show that by integrating this process with a minimal set cover algorithm, we can extract a partition with the smallest size that in turn leads to the desired PWA model. 

We apply our framework on two practical problems: the approximation of a nonlinear system, namely the insulin--glucose regulation process \citep{dalla2007meal}, with affine functions with rectangular domains (Subsection \ref{subsection:exa-glucose}), and the identification of a hybrid linear system consisting in an inverted double pendulum with soft contacts on the joints (Subsection \ref{subsection:exa-pendulum}).
For both applications, we show that template-based PWA regression is favorable compared to classical PWA regression both in terms of computation time and our ability to formulate models from the results.

\subsection{Related work}
Piecewise affine systems and hybrid linear systems appear naturally in a wide range of applications \citep{jungers2009thejoint}, or as approximations of more complex systems \citep{breiman1993hinging}.
Therefore, the problems of switched affine (SA) and piecewise affine (PWA) regression have received a lot of attention in the literature; see, e.g., \citet{paoletti2007identification,lauer2019hybrid} for surveys.
Both problems are known to be NP-hard \citep{lauer2019hybrid}.
The problem of SA regression can be formulated as a Mixed-Integer Program and solved using MIP solvers, but the complexity is exponential in the number of data points \citep{paoletti2007identification}.
\citet{vidal2003analgebraic} propose an efficient algebraic approach to solve the problem, but it is restricted to noiseless data.
Heuristics to solve the problem in the general case include greedy algorithms \citep{bemporad2005aboundederror}, continuous relaxations of the MIP \citep{munz2005continuous}, block–coordinate descent (similar to $k$-mean regression) algorithms \citep{bradley2000kplane,lauer2013estimating} and refinement of the algebraic approach using sum-of-squares relaxations \citep{ozay2009robust}; however, these methods offer no guarantees of finding an (optimal) solution to the problem.
As for PWA regression, classical approaches include clustering-based methods \citep{ferraritrecate2005aclustering}, data classification followed by geometric clustering \citep{nakada2005identification} and block–coordinate descent algorithms \citep{bemporad2022apiecewise}; however, these methods are not guaranteed to find a (minimal) piecewise affine model.

Piecewise affine systems with constraints on the domain appear naturally in several applications including biology \citep{porreca2009identification} and mechanical systems with contact forces \citep{aydinoglu2020contactaware}, or as approximations of nonlinear systems \citep{smarra2020datadriven}.
Techniques for PWA regression with rectangular domains have been proposed in \citet{munz2002identification,smarra2020datadriven}; however, these approaches impose further restrictions on the arrangement of the domains of the functions (e.g., forming a grid) and they are not guaranteed to find a solution with a minimal number of pieces.
In the one-dimensional case (e.g., time series), an exact efficient algorithm for optimal PWA regression was proposed by \citet{ozay2012asparsification}, but the approach does not extend to higher dimension.
As for the application involving mechanical systems with contact forces (presented in Subsection \ref{subsection:exa-pendulum}), a recent work by \citet{jin2022learning} proposes a heuristic based on minimizing a loss function to learn \emph{linear complementary systems}.

\subsection{Approach at a glance}
\begin{figure}[t]
\centering
\tikzset{
  IndexSet/.style={draw},
  InfeasCert/.style={rounded corners,fill=red!50},
  Valid/.style={rounded corners,fill=green!50}
}
\pgfplotsset{
  every axis/.style={
    width=0.54\textwidth,
    height=4cm,
    xtick={-1,-0.5,0,0.5,1},
    ytick={-1,-0.5,0,0.5,1},
    domain=-1:1,
    clip marker paths=true,
    declare function={
      nonlin(\x) = atan(10*\x)*exp(-abs(\x))/180*pi;
      lini(\x) = -0.4565738459677249*\x - 1.034561584207168;
      linii(\x) = 4.532283517444388*\x;
      liniii(\x) = -0.4565738459677249*\x + 1.034561584207168;
    }
  }
}
\subfigure[]{\begin{tikzpicture}[scale=0.55]
  \begin{axis}[
    at={(0,0)},
    ylabel={$y$},
    legend pos=north west,
    legend style={draw=none},
    xticklabels={},
  ]
    \addplot[dashed,line width=2pt,samples=100,gray] {nonlin(x)};
    \addlegendentry{function}
    \addplot[only marks,samples=11,mark size=3pt] {nonlin(x)};
    \addlegendentry{data points}
    \addplot[only marks,samples at={-1},mark size=0pt] {nonlin(x)} node[above] {1};
    \addplot[only marks,samples at={-0.8},mark size=0pt] {nonlin(x)} node[above] {2};
    \addplot[only marks,samples at={-0.6},mark size=0pt] {nonlin(x)} node[above] {3};
    \addplot[only marks,samples at={-0.4},mark size=0pt] {nonlin(x)} node[above] {4};
    \addplot[only marks,samples at={-0.2},mark size=0pt] {nonlin(x)} node[above] {5};
    \addplot[only marks,samples at={0},mark size=0pt] {nonlin(x)} node[right] {6};
    \addplot[only marks,samples at={0.2},mark size=0pt] {nonlin(x)} node[below] {7};
    \addplot[only marks,samples at={0.4},mark size=0pt] {nonlin(x)} node[below] {8};
    \addplot[only marks,samples at={0.6},mark size=0pt] {nonlin(x)} node[below] {9};
    \addplot[only marks,samples at={0.8},mark size=0pt] {nonlin(x)} node[below] {10};
    \addplot[only marks,samples at={1},mark size=0pt] {nonlin(x)} node[below] {11};
    \node[draw,fill=white,anchor=south east,outer sep=0pt] at (rel axis cs:1,0) {\bf I};
  \end{axis}
  \begin{axis}[
    at={(0.46\textwidth,0)},
    legend pos=north west,
    legend style={draw=none},
    xticklabels={},
    yticklabels={},
  ]
    \addplot[dashed,line width=2pt,samples=100,gray,forget plot] {nonlin(x)};
    \addplot[only marks,samples=11,mark size=3pt,color=blue] {nonlin(x)};
    \addlegendentry{index set}
    \addplot[
      only marks,samples at={-0.4,-0.2,0},mark size=4pt,
      mark options={draw=red!90,fill opacity=0,line width=1.5pt},
    ] {nonlin(x)};
    \addlegendentry{\,infeas.\ cert.}
    \node[draw,fill=white,anchor=south east,outer sep=0pt] at (rel axis cs:1,0) {\bf II};
  \end{axis}
  \begin{axis}[
    at={(0,-2.8cm)},
    ylabel={$y$},
    legend pos=north west,
    legend style={draw=none},
    xticklabels={},
  ]
    \addplot[dashed,line width=2pt,samples=100,gray,forget plot] {nonlin(x)};
    \addplot[only marks,samples=11,mark size=3pt,forget plot] {nonlin(x)};
    \addplot[only marks,domain=-1:-0.2,samples=5,mark size=3pt,color=blue] {nonlin(x)};
    \addlegendentry{index set}
    \addplot[domain=-1:-0.2,samples=5,line width=2pt,color=green] {lini(x)};
    \addlegendentry{line 1}
    \node[draw,fill=white,anchor=south east,outer sep=0pt] at (rel axis cs:1,0) {\bf III};
  \end{axis}
  \begin{axis}[
    at={(0.46\textwidth,-2.8cm)},
    legend pos=north west,
    legend style={draw=none},
    xticklabels={},
    yticklabels={},
  ]
    \addplot[dashed,line width=2pt,samples=100,gray,forget plot] {nonlin(x)};
    \addplot[only marks,samples=11,mark size=3pt,forget plot] {nonlin(x)};
    \addplot[only marks,domain=-0.2:1,samples=7,mark size=3pt,color=blue] {nonlin(x)};
    \addlegendentry{index set}
    \addplot[
      only marks,samples at={0.4,0.2,0},mark size=4pt,
      mark options={draw=red!90,fill opacity=0,line width=1.5pt},
    ] {nonlin(x)};
    \addlegendentry{\,infeas.\ cert.}
    \node[draw,fill=white,anchor=south east,outer sep=0pt] at (rel axis cs:1,0) {\bf IV};
  \end{axis}
  \begin{axis}[
    at={(0,-5.6cm)},
    xlabel={$x$},
    ylabel={$y$},
    legend pos=north west,
    legend style={draw=none},
  ]
    \addplot[dashed,line width=2pt,samples=100,gray,forget plot] {nonlin(x)};
    \addplot[only marks,samples=11,mark size=3pt,forget plot] {nonlin(x)};
    \addplot[only marks,domain=-0.2:0.2,samples=3,mark size=3pt,color=blue] {nonlin(x)};
    \addlegendentry{index set}
    \addplot[domain=-0.2:0.2,samples=5,line width=2pt,color=green] {linii(x)};
    \addlegendentry{line 2}
    \node[draw,fill=white,anchor=south east,outer sep=0pt] at (rel axis cs:1,0) {\bf V};
  \end{axis}
  \begin{axis}[
    at={(0.46\textwidth,-5.6cm)},
    xlabel={$x$},
    legend pos=north west,
    legend style={draw=none},
    yticklabels={},
  ]
    \addplot[dashed,line width=2pt,samples=100,gray,forget plot] {nonlin(x)};
    \addplot[only marks,samples=11,mark size=3pt,forget plot] {nonlin(x)};
    \addplot[only marks,domain=0.2:1,samples=5,mark size=3pt,color=blue] {nonlin(x)};
    \addlegendentry{index set}
    \addplot[domain=0.2:1,samples=5,line width=2pt,color=green] {liniii(x)};
    \addlegendentry{line 3}
    \node[draw,fill=white,anchor=south east,outer sep=0pt] at (rel axis cs:1,0) {\bf VI};
  \end{axis}
\end{tikzpicture}}
\hfill
\subfigure[]{\scalebox{0.6}{\begin{tikzpicture}
  \node[IndexSet] (a) {$I_0=\{1,\ldots,11\}$};
  \node[InfeasCert,right=5pt of a] (a-cert) {$C_0=\{4,5,6\}$};
  \node[above=0pt of a]{\fbox{\textbf{II}}};
  \node[IndexSet,below left=1.5cm and -3mm of a] (b) {$I_1=\{1,\ldots,5\}$};
  \node[Valid,below=5pt of b] (b-line) {line 1};
  \node[above=0pt of b]{\fbox{\textbf{III}}};
  \draw[-latex] (a) -- (b) node[midway,right] {$\neg6$};
  \node[IndexSet,below right=1.5cm and -3mm of a] (c) {$I_2=\{5,\ldots,11\}$};
  \node[InfeasCert,right=5pt of c] (c-cert) {$C_2=\{6,7,8\}$};
  \node[above=0pt of c]{\fbox{\textbf{IV}}};
  \draw[-latex] (a) -- (c) node[midway,right] {$\neg4$};
  \node[IndexSet,below left=1.5cm and -3mm of c] (d) {$I_3=\{5,6,7\}$};
  \node[Valid,below=5pt of d] (d-line) {line 2};
  \node[above=0pt of d]{\fbox{\textbf{V}}};
  \draw[-latex] (c) -- (d) node[midway,right] {$\neg8$};
  \node[IndexSet,below right=1.5cm and -3mm of c] (e) {$I_4=\{7,\ldots,11\}$};
  \node[Valid,below=5pt of e] (e-line) {line 3};
  \node[above=0pt of e]{\fbox{\textbf{VI}}};
  \draw[-latex] (c) -- (e) node[midway,right] {$\neg6$};
\end{tikzpicture}}}
\caption{(a) Illustration of our algorithm on a simple data set with $11$ data points $(x_k, y_k) \in \Re \times \Re$ and (b) the index sets
explored by our algorithm. }
\label{fig:approach-glance}
\end{figure}
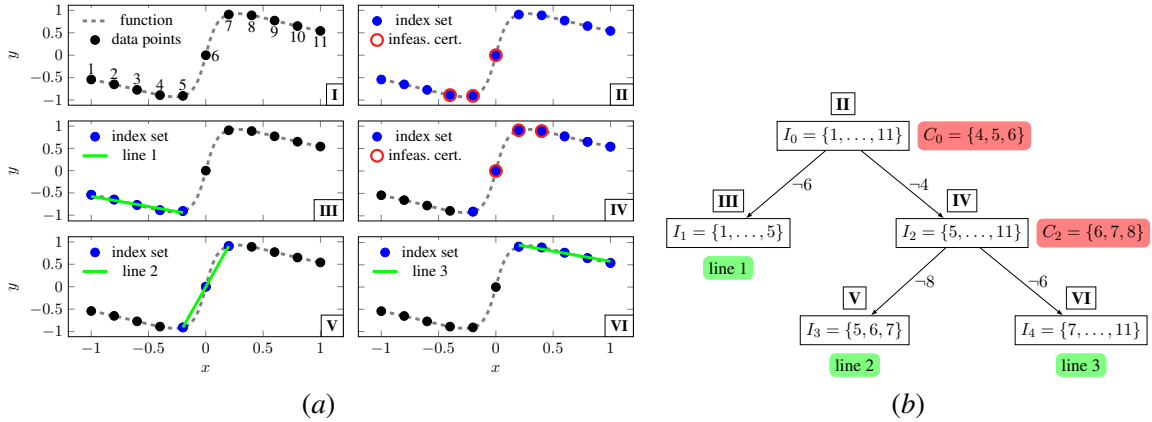

Figure~\ref{fig:approach-glance} (plot label ``I'') shows the working of our algorithm on a simple data set with $K=11$ points $(x_k, y_k)\in\Re\times\Re$.
We seek a piecewise affine (PWA) function that fits the data within the error tolerance $\epsilon = 0.1$ with the smallest number of affine functions defined on intervals.
At the very first step (label ``II''), the approach tries to fit a single straight line through all the $11$ points.
This corresponds to the \emph{index set} $I_0=\{1,\ldots,11\}$ where the indices correspond to points as shown in the plot ``I''.
However, no such line can fit the points for the given $\epsilon$.
Our approach generates an \emph{infeasibility certificate} that identifies the indices $C_0=\{4,5,6\}$ as a cause of this infeasibility (see plot ``II'').
In other words, we cannot have all three points in $C_0$ be part of the same piece of the PWA function we seek.
Therefore, our approach now splits $I_0$ into two subsets $I_1=\{1, \ldots, 5\}$ and $I_2=\{5, \ldots, 11\}$.
These  two sets are maximal intervals with respect to set inclusion and do not contain $C_0$.
The set $I_1$ can be fit by a single straight line with tolerance $\epsilon$ (see plot ``III'').
However, considering $I_2$, we notice once again that a single straight line cannot be fit (see plot ``IV'').
We identify the set $C_2=\{6,7,8\}$ as an infeasibility certificate and our algorithm splits $I_2$ into maximal subsets $I_3=\{5,6, 7\}$ and $I_4=\{7,\ldots,11\}$.
Each of these subsets can be fit by a straight line (see plots ``V'' and ``VI'').
Thus, our approach finishes by discovering three pieces that cover all the points $\{1,\ldots,11\}$.
Note that although the data point indexed by $5$ is part of two pieces, we can resolve this ``tie'' in an arbitrary manner by assigning $5$ to the first piece and removing it from the second; the same holds for the data point indexed by $7$.

\bigskip

Due to space limitation, the proofs of several results presented in the paper can be found in the extended version of the paper, available on arXiv.

\section{Problem Statement}\label{section:problem-statement}

Given $K\in\Ne_{>0}$ observation data points $\{(x_k,y_k)\}_{k=1}^K\subseteq\Re^d\times\Re^e$ (see Figures \ref{fig:illustration}\emph{(a,c)}), we wish to find a piecewise affine (PWA) function that fits the data within some given error tolerance $\epsilon\geq0$.
Formally, a PWA function over a domain $D\subseteq\Re^d$ is defined by covering the domain with $q$ regions $H_1, \ldots, H_q$ and associating an affine function $f_i(x) = A_i x + b_i$ with each $H_i$:
\[
f(x) =  A_1 x + b_1\ \text{if}\ x \in H_1,\ \dots,\ A_i x + b_i\ \text{if}\ x \in H_i,\ \dots,\ A_q x + b_q \ \text{if}\ x \in H_q. 
\]
If $H_i\cap H_j\neq\emptyset$ for $i\neq j$, then $f$ is no longer a function.
However, in such a case, we may ``break the tie'' by defining $f(x) = f_i(x)$ wherein $i = \min\,\{ j\ |\ x \in H_j\}$.

\begin{figure}
\centering
\subfigure[Data set $\calD_1$]{\includegraphics[width=0.245\textwidth]{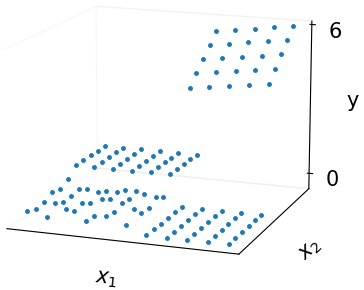}}
\hfill
\subfigure[$\calD_1$ fit: $\epsilon=0.375$]{\includegraphics[width=0.245\textwidth]{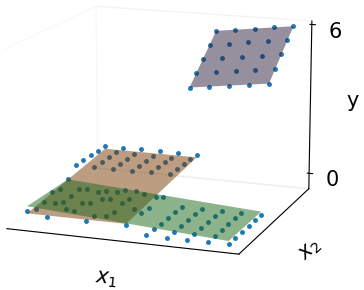}}
\hfill
\subfigure[Data set $\calD_2$]{\includegraphics[width=0.245\textwidth]{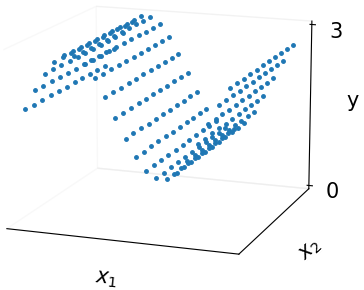}}
\hfill
\subfigure[$\calD_2$ fit:  $\epsilon=0.15$]{\includegraphics[width=0.245\textwidth]{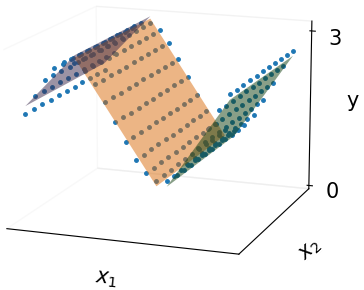}}
\caption{Template-based piecewise affine (TPWA) regression.
\emph{(a)}, \emph{(c)}: Data points $(x_k,y_k)\in\Re^2\times\Re$.
\emph{(b)}, \emph{(d)}: TPWA fit with rectangular domains and error tolerance $\epsilon$.}
\label{fig:illustration}
\vspace*{-0.5cm}
\end{figure}



\begin{problem}[PWA regression]\label{prob:OPWAR}%
Given data $\{(x_k,y_k)\}_{k=1}^K$ and an error bound $\epsilon\geq0$, find $q$ regions $H_i\subseteq\Re^d$ and affine functions $f_i(x) = A_i x + b_i$ such that
\begin{equation}\label{equat:error}
\forall\,k,\:\exists\,i : x_k\in H_i  \quad\text{and}\quad \forall\,k,\:\forall\,i,\: x_k\in H_i  \Rightarrow \lVert y_k-f_i(x_k)\rVert_\infty \leq \epsilon.
\end{equation}
\end{problem}

Furthermore, we restrict the domain $H_i$ of each affine piece by specifying a \emph{template}, which can be any function $p:\Re^d\to\Re^h$.
Given a template $p$ and a vector $c\in\Re^h$, we define the set $H(c)$ as
\begin{equation}\label{equat:template-region}
H(c) = \{ x \in \Re^d : p(x) \leq c \},
\end{equation}
wherein $\leq$ is elementwise and $c\in\Re^h$ parameterizes the set $H(c)$.
We let $\calH=\{H(c):c\in\Re^h\}$ denote the set of all regions in $\Re^d$ described by the template $p$.

Fixing a template \emph{a priori} controls the complexity of the domains, and thus of the overall PWA function.
The \emph{rectangular} template $p(x)=[x;-x]$ defines regions $H(c)$ that form boxes in $\Re^d$.
Similarly, allowing pairwise differences between individual variables as components of $p$ yields the ``octagon domain'' \citep{mine2006theoctagon}.
Figures \ref{fig:illustration}\emph{(b,c)} illustrate PWA functions with rectangular domains.
Thus, we define the \emph{template-based piecewise affine} (TPWA) regression problem:

\begin{problem}[TPWA regression]\label{prob:PWAR}%
Given data $\{(x_k,y_k)\}_{k=1}^K$, a template $p:\Re^d\to\Re^h$ and an error bound $\epsilon> 0$, find $q$ regions $H_i\in\calH$ and affine functions $f_i(x)=A_i x + b_i$ such that \eqref{equat:error} is satisfied.
\end{problem}

Problem \ref{prob:PWAR} can be posed as a decision problem (given a bound $\qhat$, is there a TPWA function with $q\leq\qhat$ pieces?), or as an optimization problem (find a TPWA function with minimum number of pieces).
Although a solution to the decision problem can be used repeatedly to solve the optimization problem, we will focus on directly solving the optimization problem in this paper. 
Problem \ref{prob:PWAR} is closely related to the well-known problem of \emph{switched affine} (SA) regression, in which one aims to explain the data with a few affine functions, but there is no assumption on which function may explain a particular data point $(x_k, y_k)$.

\begin{problem}[SA regression]\label{prob:SAR}%
Given data $\{(x_k,y_k)\}_{k=1}^K$ and an error bound $\epsilon\geq0$, find $q$ affine functions $f_i(x) = A_i x + b_i$ such that $\forall\,k$, $\exists\,i$: $\lVert y_k-f_i(x)\rVert_\infty \leq \epsilon$.
\end{problem}

\section{Computational Complexity}\label{section:complexity}
The problem of SA regression (Problem \ref{prob:SAR}) is known to be NP-hard, even for $q=2$ \citep[\S 5.2.4]{lauer2019hybrid}.
In this section, we show that the same holds for the decision version of Problem \ref{prob:PWAR}.
We study the problem in the RAM model, wherein the problem input size is $K(d+e)+\size(p)$, where $\size(p)$ is the size needed to describe the template $p$. 

\begin{theorem}[NP-hardness]\label{thm:np-hard}%
The decision version of problem \ref{prob:PWAR} is NP-hard, even for $q=2$ and rectangular templates.
\end{theorem}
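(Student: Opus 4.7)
The plan is to reduce from the switched-affine (SA) regression problem with $q = 2$, which is NP-hard by \citet[\S 5.2.4]{lauer2019hybrid}. Given an SA instance $(\{(x_k, y_k)\}_{k=1}^K, \epsilon)$ with $x_k \in \Re^d$ and $y_k \in \Re^e$, I will construct in polynomial time a TPWA instance over a lifted input space with rectangular template such that the two instances have the same answer to the $q = 2$ decision problem.

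The construction lifts each input $x_k$ to $\xhat_k = (x_k, v_k) \in \Re^{d+n}$ by appending auxiliary coordinates $v_k$, and adds a polynomial number of auxiliary ``anchor'' data points. The choices of $v_k$ and of the anchors are driven by two requirements: (i) for any partition $S_1 \cup S_2 = \{1, \ldots, K\}$ that is realizable as an SA $q = 2$ solution, the lifted points $\{\xhat_k : k \in S_j\}$ can be enclosed in an axis-aligned rectangle in $\Re^{d+n}$ disjoint from the other set; (ii) any 2-rectangle, 2-affine fit of the augmented data, when restricted to the original input, defines a valid SA fit of the original data. Requirement (i) is achieved by taking $v_k = e_k$, the $k$-th standard basis vector in $\Re^K$, which allows any subset of $\{1, \ldots, K\}$ to be separated by boxes on the auxiliary coordinates. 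Requirement (ii) is more delicate: naively, the lifted affine functions gain fitting freedom from the $v$-coordinate coefficients, which would trivialize the problem. The anchors are designed so that the auxiliary-coordinate coefficients of any feasible TPWA affine function are forced to be uniform across all relevant auxiliary directions, making the fit on the lifted data reduce to an affine function of $x$ alone.

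With both requirements in place, the two directions of the equivalence follow: an SA $q = 2$ solution yields a TPWA $q = 2$ solution by using the SA fits (extended trivially to ignore the auxiliary coordinates) and rectangles that realize the SA partition through the $v_k$'s; conversely, a TPWA $q = 2$ solution restricts to an SA $q = 2$ solution by projecting the rectangles back to $\Re^d$ and invoking the anchors' cancellation property. Since the construction has size polynomial in $K$ and $d$, NP-hardness of SA regression transfers to TPWA regression with $q = 2$ and rectangular templates.

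The main obstacle will be the precise design of the anchor points: they must simultaneously (a) be coverable by whichever two rectangles a candidate TPWA solution chooses, (b) force the cancellation property on both TPWA affine functions at once (not just one of them), and (c) not be so restrictive as to make the TPWA instance infeasible whenever the SA instance is feasible. Balancing these three constraints is the crux of the reduction's correctness, and is where the detailed construction (deferred to the extended version on arXiv) will have to be carefully tuned, for instance by introducing anchor pairs that share an input location but have conflicting outputs to force them into distinct rectangles.
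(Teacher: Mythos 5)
Your high-level plan coincides with the paper's: reduce from SA regression with $q=2$, lift each input $x_k$ to $[x_k;\chi_k]\in\Re^{d+K}$ using indicator vectors so that axis-aligned boxes on the auxiliary coordinates can realize any partition, and add anchor points so that the auxiliary coefficients of the fitted affine maps are neutralized. However, the step you defer is exactly the step that carries the proof, and the mechanism you sketch for it would not work. The difficulty in the ``TPWA $\Rightarrow$ SA'' direction is not to force the cancellation on \emph{both} affine functions (your requirement (b)); it is to guarantee that the anchor associated with index $k$ lies in the \emph{same} rectangle as the lifted data point for index $k$, since an anchor only constrains the affine function of a region that actually contains it. Your proposed device --- anchor pairs with a common input but conflicting outputs, ``forced into distinct rectangles'' --- cannot do this: two anchors sharing an input are either both inside a given rectangle or both outside it, so they can never be separated; and even if they could be, separating them would not tie the anchor to the data point it is supposed to police.

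The paper closes this gap with a symmetrization that your construction lacks. For each $k$ it includes both $([x_k;\chi_k],y_k)$ and $([-x_k;\chi_k],-y_k)$, together with the two anchors $([[0]_d;\chi_k],[\epsilon]_e)$ and $([[0]_d;\chi_k],[-\epsilon]_e)$ placed at the midpoint of the two lifted inputs. Among the three collinear inputs $[x_k;\chi_k]$, $[-x_k;\chi_k]$, $[[0]_d;\chi_k]$, the pigeonhole principle puts at least two in a common region $H_i$, and convexity of $H_i$ then forces the midpoint $[[0]_d;\chi_k]$ into $H_i$ as well; the two conflicting anchor outputs $[\pm\epsilon]_e$ pin $C_i\chi_k=0$ for that same $i$, and the output symmetry $(\sigma x_k,\sigma y_k)$ converts a fit of either endpoint into $\lVert y_k-B_ix_k\rVert_\infty\leq\epsilon$. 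Without the $\pm$ copies (or some equivalent device), a TPWA solution could place $[x_k;\chi_k]$ in $H_1$ while the anchor lies only in $H_2$; the cancellation would then constrain the wrong affine function and the reduction would fail. So your proposal has the right skeleton but is missing the key idea, and as written it does not constitute a proof.
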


The proof reduces Problem \ref{prob:SAR} which is known to be NP-hard to Problem \ref{prob:PWAR}, and is provided in Appendix \ref{app:np-hardness-proof}.
Despite the problem being NP-hard, one can show that for fixed dimension $d$, template $p:\Re^d\to\Re^h$ and number of pieces $q$, the complexity is polynomial in the size $K$ of the data set.
Note that a similar result holds for Problem \ref{prob:SAR} \citep[Theorem 5.4]{lauer2019hybrid}.

For every $c\in\Re^h$, let $I(c)=\{k\in\Ne:1\leq k\leq K,\:x_k\in H(c)\}$ be the set of all indices $k$ such that $x_k\in H(c)$.
Also, let $\calI=\{I(c):c\in\Re^h\}$ be the set of all such index sets.


\begin{theorem}[{Polynomial complexity in $K$}]\label{thm:poly-datasize}%
For fixed dimension $d$, template $p:\Re^d\to\Re^h$ and number of pieces $q$, the complexity of Problem \ref{prob:PWAR} is bounded by $O(K^{qh})$.
\end{theorem}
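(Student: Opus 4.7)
The plan is to reduce the continuous search over $c\in\Re^h$ to a finite enumeration, by showing that the collection $\calI$ of attainable index sets has only polynomially many elements in $K$. The key observation is that for each coordinate $j\in\{1,\ldots,h\}$, the univariate threshold $\{k : p_j(x_k)\leq c_j\}$ takes at most $K+1$ distinct values as $c_j$ ranges over $\Re$, namely one for each way the threshold can fall among the sorted values $p_j(x_1),\ldots,p_j(x_K)$. Since $I(c)=\bigcap_{j=1}^h \{k:p_j(x_k)\leq c_j\}$, this yields the combinatorial bound $|\calI|\leq (K+1)^h = O(K^h)$. Moreover, the family $\calI$ can be enumerated explicitly in polynomial time from the sorted lists of template values on the data.

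Next, I would show that Problem \ref{prob:PWAR} admits a $q$-piece solution if and only if there exist $I_1,\ldots,I_q\in\calI$ whose union is $\{1,\ldots,K\}$ and such that, for each $i$, the sub-data $\{(x_k,y_k):k\in I_i\}$ admits an affine fit within error $\epsilon$. The forward direction is immediate by taking $I_i:=I(c_i)$ from a given solution. For the converse, given such index sets one picks $c_i\in\Re^h$ realizing $I_i$ (which exists by definition of $\calI$) together with the affine coefficients $(A_i,b_i)$ given by the fit; condition \eqref{equat:error} then holds by construction.

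The algorithm therefore enumerates all $q$-tuples $(I_1,\ldots,I_q)\in\calI^q$, of which there are at most $O(K^{qh})$. For each tuple, two checks are performed: the covering condition $\bigcup_i I_i = \{1,\ldots,K\}$, which is linear in $K$; and the affine fitability of each $I_i$, which reduces to a linear program in $(d+1)e$ variables with $2e|I_i|$ constraints (encoding $\lVert y_k - A_i x_k - b_i\rVert_\infty \le \epsilon$). Since $d$, $e$, $h$, and $q$ are fixed, both checks run in time polynomial in $K$, and the overall running time is dominated by the $O(K^{qh})$ enumeration factor.

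The main technical point, more bookkeeping than obstacle, is to ensure that the enumeration of $\calI$ is truly constructive: given a target index set $I\in\calI$ one must actually exhibit a $c\in\Re^h$ with $I(c)=I$. This is handled by choosing each $c_j$ just above an appropriate sorted value of $p_j(x_k)$, with a careful tie-breaking convention (e.g.\ taking $c_j$ equal to the largest $p_j(x_k)$ for $k\in I$ on coordinate $j$, and verifying that this product construction produces exactly $I$ for any $I$ obtained by the coordinate-wise sweep above). Combining the enumeration of $\calI$, the polynomial-size LP for affine regression on each $I_i$, and the covering check yields the claimed $O(K^{qh})$ bound up to lower-order polynomial factors in $K$ that are absorbed into the big-$O$ for fixed $d,e,h,q$.
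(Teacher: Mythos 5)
Your proposal is correct and follows essentially the same route as the paper's proof: bound $|\calI|$ by $O(K^h)$ via a finite, constructive enumeration of attainable index sets (the paper does this through the sets $P(c)$ of template values, you do it coordinate-wise, but these are the same counting argument), then test each index set for affine fitability with a linear program and enumerate the $O(K^{qh})$ combinations of $q$ such sets that cover $\{1,\ldots,K\}$. No gaps; the tie-breaking point you flag is handled exactly as you suggest.
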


Proof is provided in Appendix~\ref{app:polynomial-complexity-in-K}.
The algorithm presented in the proof of Theorem \ref{thm:poly-datasize}, although polynomial in the size of the data set, can be quite expensive in practice.
For instance, in dimension $d=2$, with rectangular regions (i.e., $h=4$) and $K=100$ data points, one would need to solve $K^h=10^8$ regression problems,%
\footnote{ In theory,  by using Sauer--Shelah's lemma (see, e.g., \citealp[Lemma~6.2.2]{harpeled2011geometric}), this number can be reduced to $\sum_{i=1}^h\binom{K}{i}\approx4\times10^6$.
This is because the \emph{VC dimension} of rectangular regions in $\Re^d$ is $2d$.}
each of which is a linear program.

In the next section, we present an algorithm for TPWA regression that is generally several orders of magnitude faster by using a \emph{top-down} approach.

\section{Top-down Algorithm for TPWA Regression}\label{section:algorithm}
We first define the concept of compatible and maximal compatible index sets.
\begin{definition}[Maximal compatible index set]\label{defin:max-comp-region}%
Consider an instance of Problem \ref{prob:PWAR}.
An index set $I\subseteq\{1,\ldots,K\}$ is \emph{compatible} if (a) $I\in\calI$ and (b) there is an affine function $f(x)=Ax+b$ such that $\forall\,k\in I$, $\lVert y_k-f(x_k)\rVert_\infty\leq\epsilon$.
A compatible index set $I$ is \emph{maximal} if there is no compatible index set $I'$ such that $I\subsetneq I'$.
\end{definition}
The key idea is that we can restrict overselves to searching over \emph{maximal} compatible index sets in order to find a solution to Problem \ref{prob:PWAR}.
See Appendix \ref{app:maximal-sufficient} for a proof.

Maximal compatible index sets can be computed by using a recursive \emph{top-down} approach (implemented in Algorithm~\ref{algo:top-down}):
Consider the lattice $\calI$ ordered by $\subseteq$ relationship. Our algorithm starts at the very top of this lattice and ``descends'' until we find maximal compatible index sets. 
At each step, we consider a current set  $I\in\calI$ (initially, $I=\{1,\ldots,K\}$) that is a candidate for being compatible and check it for compatibility.  If $I$ is not compatible, we find subsets $I_1,\ldots,I_S\subsetneq I$ using the  $\FindSubsets$ procedure, which is required to be  \emph{consistent}, as defined below.


\begin{algorithm2e}[t]
\DontPrintSemicolon
\caption{Top-down algorithm to compute maximal compatible index sets.}\label{algo:top-down}
\KwData{Data set $\{(x_k,y_k)\}_{k=1}^K$, template $p$}
\KwResult{Collection $\calS$ of all maximal compatible index sets}
$\calS\gets\emptyset$ (``compatible''); $\calU\gets\{\{1,\ldots,K\}\}$ (``to explore''); $\calV\gets\emptyset$ (``visited'')\;
\While{$\calU\setminus\calV$ is not empty}{
    Pick an index set $I$ in $\calU\setminus\calV$\;
    \eIf{$I$ is compatible}{
      Add $I$ to $\calS$; Add to $\calV$ all subsets of $I$; Remove from $\calS$ all subsets of $I$
    }%
    {
      $(I_1,\ldots,I_S)\gets\FindSubsets(I)$\tcp*[l]{satisfies Definition \ref{defn:consistency}}
      Add $I_1,\ldots,I_S$ to $\calU$; Add $I$ to $\calV$
    }
}
\Return $\calS$\;
\end{algorithm2e}

\begin{definition}[Consistency]\label{defn:consistency}
Given a non-compatible index set $I\in\calI$, a collection of index sets $I_1,\ldots,I_S\in\calI$ is said to be \emph{consistent} w.r.t.\ $I$ if (a) for each $s$, $I_s\subsetneq I$ and (b) for every compatible index set $J\subseteq I$, there is $s$ such that $J\subseteq I_s$.
\end{definition}

\begin{theorem}[Correctness of Algorithm \ref{algo:top-down}]\label{thm:algo-top-down-sound}%
If $\FindSubsets$ satisfies that for every non-compatible index set $I\in\calI$, the output of $\FindSubsets(I)$ is consistent w.r.t.\ $I$, then Algorithm \ref{algo:top-down} is correct, meaning that it terminates and the output $\calS$ is the collection of all maximal compatible index sets.
\end{theorem}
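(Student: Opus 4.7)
Since $\calI \subseteq 2^{\{1,\ldots,K\}}$ is finite, I would first observe that each iteration moves the selected $I$ from $\calU \setminus \calV$ into $\calV$, so $|\calV|$ strictly increases. Hence the loop halts after at most $|\calI|$ iterations, and at termination $\calU \subseteq \calV$.

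\textbf{Antichain invariant.} The key structural fact I would maintain is that $\calS$ always consists of compatible index sets forming an antichain under strict inclusion. Compatibility is immediate from the ``compatible'' branch. For the antichain property, when $I$ is added to $\calS$ the explicit removal step deletes any $S \in \calS$ with $S \subsetneq I$; conversely, no $I' \supsetneq I$ can be in $\calS$ at that moment, because the earlier iteration in which $I'$ was added would have placed every subset of $I'$—in particular $I$—into $\calV$, contradicting $I \in \calU \setminus \calV$ when it was picked.

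\textbf{Completeness.} I would next define $\calR$ as the smallest collection containing $\{1,\ldots,K\}$ and closed under $\FindSubsets$ on its non-compatible members. Using consistency and induction on $|I|$, I would show that every compatible index set $J$ is contained in some compatible $I \in \calR$: starting from $\{1,\ldots,K\}$, at each non-compatible step consistency yields a strictly smaller member of $\calR$ still containing $J$, and the descent must reach a compatible set. Applied to a maximal compatible $M$, this gives compatible $I \in \calR$ with $M \subseteq I$, and maximality forces $M = I \in \calR$. A short induction on the recursive depth of $\calR$ shows every member of $\calR$ enters $\calU$ during the run, and hence enters $\calV$ by termination. An element of $\calV$ is added either (a) by being directly processed, or (b) as a subset of some compatible $I^*$ processed in the ``compatible'' branch; in case (b) maximality forces $M = I^*$, so $M$ is directly processed in any case. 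Since $M$ is compatible, processing places it in $\calS$, and by the antichain invariant $M$ cannot subsequently be removed—that would require adding a compatible proper superset of $M$, which is impossible.

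\textbf{Soundness and main obstacle.} Conversely, any $S$ in the final $\calS$ is compatible; if it were not maximal, a maximal compatible $M \supsetneq S$ would, by the previous paragraph, also lie in the final $\calS$, contradicting the antichain invariant. Hence $S$ is maximal compatible, completing the proof. The step I expect to be most delicate is the antichain invariant: it depends on the precise ordering of the three updates inside the ``compatible'' branch (add $I$ to $\calS$, add subsets of $I$ to $\calV$, remove subsets from $\calS$) and on the ``$\calU \setminus \calV$'' guard in the while loop; together they are what rules out the algorithm ever retaining a non-maximal compatible set in its output.
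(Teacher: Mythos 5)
Your proof is correct and follows essentially the same route as the paper's: termination from finiteness of $\calI$ and the visited-set bookkeeping, completeness via a consistency-driven descent from $\{1,\ldots,K\}$ down to any maximal compatible set (the paper phrases this as a minimal-cardinality extremal argument rather than your explicit closure $\calR$, but it is the same idea), and soundness from the antichain property of $\calS$ combined with completeness. The one step deserving an extra line is the induction showing every member of $\calR$ enters $\calU$: its inductive step needs that a non-compatible $I\in\calI$ cannot enter $\calV$ through the compatible branch (a subset in $\calI$ of a compatible set is itself compatible, via the same affine function), so such an $I$ must actually be picked before its $\FindSubsets$ children are generated --- this is exactly the dichotomy you already state, just applied to the intermediate sets rather than only to $M$.
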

The proof is provided in Appendix~\ref{app:top-down-soundness-proof}.

\subsection{Implementation of $\FindSubsets$ using infeasibility certificates}

We now explain how to implement $\FindSubsets$ so that it is consistent.
For that, we use infeasibility certificates, which are index sets that are not compatible:

\begin{definition}[Infeasibility certificate]
An index set $C\subseteq\{1,\ldots,K\}$ is an \emph{infeasibility certificate} if there is no affine function $f(x)=Ax+b$ such that $\forall\,k\in C$, $\lVert y_k-f(x_k)\rVert_\infty\leq\epsilon$.
\end{definition}

Note that any incompatible index set $I$ contains an infeasibility certificate $C\subseteq I$ (e.g., $C=I$). However, it is quite useful to extract an infeasibility certificate $C$ that is as small as possible.
Thereafter, from an infeasibility certificate $C\subseteq I$, one can compute a consistent collection of index subsets of $I$ by tightening each component of the template \emph{independently}, in order to exclude a minimal nonzero number of indices from the infeasibility certificate, while keeping the other components unchanged.
This results in an implementation of $\FindSubsets$ that satisfies the consistency property, described in Algorithm \ref{algo:subdivision-cert}.
Figure \ref{fig:rectangular-subdivision-cert} shows an illustration for rectangular regions.
The correctness of Algorithm \ref{algo:subdivision-cert} is proved in Appendix \ref{app:correctness-findsubsets}.

\begin{algorithm2e}[t]
\DontPrintSemicolon
\caption{An implementation of $\FindSubsets$ using infeasibility certificates}\label{algo:subdivision-cert}
\KwData{Data set $\{(x_k,y_k)\}_{k=1}^K$, template $p=[p^1,\ldots,p^h]$, non-compatible index set $I=I(c)$ where $c=[c^1,\ldots,c^h]$, infeasibility certificate $C\subseteq I$}
\KwResult{A collection of index sets $I_1,\ldots,I_S$ consistent w.r.t.\ $I$}
\ForEach{$s = 1,\ldots,h$}{
  $\chat^s\gets\max\,\{p^s(x_k):k\in I,\,p^s(x_k)<\max_{\ell\in C}\,p^s(x_\ell)\}$\;
  Define $I_s=I([c^1,\ldots,c^{s-1},\chat^s,c^{s+1},\ldots,c^h])$\;
}
\Return all nonempty index sets $I_1,\ldots,I_h$\;
\end{algorithm2e}

\begin{figure}
\centering
\begin{tikzpicture}[x=2.5cm,y=1.55cm]
\draw[thick] (0,0) -- (1,0) -- (1,1) -- (0,1) -- cycle;
\node[anchor=center] at (0.5,-0.3) {$I$};
\foreach \x in {(0,0.2), (0.8,0), (1,0.8), (0.2,1), (0.7,0.5)}{
  \fill[color=black] \x circle (3pt);
}
\foreach \x in {(0.5,0.3), (0.5,0.7), (0.3,0.5), (1, 0.4)}{
  \fill[color=red] \x circle (3pt);
}
\end{tikzpicture}
\hfill
\begin{tikzpicture}[x=2.5cm,y=1.55cm]
\draw[thick,dashed] (0,0) -- (1,0) -- (1,1) -- (0,1) -- cycle;
\draw[thick] (0.5,0) -- (1,0) -- (1,1) -- (0.5,1) -- cycle;
\node[anchor=center] at (0.5,-0.3) {$I_1$};
\foreach \x in {(0.8,0), (1,0.8), (0.7,0.5)}{
  \fill[color=black] \x circle (3pt);
}
\foreach \x in {(0,0.2), (0.2,1)}{
  \draw[thick,color=black] \x circle (3pt);
}
\foreach \x in {(0.5,0.3), (0.5,0.7), (1, 0.4)}{
  \fill[color=red] \x circle (3pt);
}
\foreach \x in {(0.3,0.5)}{
  \draw[thick,color=red] \x circle (3pt);
}
\end{tikzpicture}
\hfill
\begin{tikzpicture}[x=2.5cm,y=1.55cm]
\draw[thick,dashed] (0,0) -- (1,0) -- (1,1) -- (0,1) -- cycle;
\draw[thick] (0,0) -- (0.8,0) -- (0.8,1) -- (0,1) -- cycle;
\node[anchor=center] at (0.5,-0.3) {$I_2$};
\foreach \x in {(0,0.2), (0.8,0), (0.2,1), (0.7,0.5)}{
  \fill[color=black] \x circle (3pt);
}
\foreach \x in {(1,0.8)}{
  \draw[thick,color=black] \x circle (3pt);
}
\foreach \x in {(0.5,0.3), (0.5,0.7), (0.3,0.5)}{
  \fill[color=red] \x circle (3pt);
}
\foreach \x in {(1, 0.4)}{
  \draw[thick,color=red] \x circle (3pt);
}
\end{tikzpicture}
\hfill
\begin{tikzpicture}[x=2.5cm,y=1.55cm]
\draw[thick,dashed] (0,0) -- (1,0) -- (1,1) -- (0,1) -- cycle;
\draw[thick] (0,0.4) -- (1,0.4) -- (1,1) -- (0,1) -- cycle;
\node[anchor=center] at (0.5,-0.3) {$I_3$};
\foreach \x in {(1,0.8), (0.2,1), (0.7,0.5)}{
  \fill[color=black] \x circle (3pt);
}
\foreach \x in {(0,0.2), (0.8,0)}{
  \draw[thick,color=black] \x circle (3pt);
}
\foreach \x in {(0.5,0.7), (0.3,0.5), (1, 0.4)}{
  \fill[color=red] \x circle (3pt);
}
\foreach \x in {(0.5,0.3)}{
  \draw[thick,color=red] \x circle (3pt);
}
\end{tikzpicture}
\hfill
\begin{tikzpicture}[x=2.5cm,y=1.55cm]
\draw[thick,dashed] (0,0) -- (1,0) -- (1,1) -- (0,1) -- cycle;
\draw[thick] (0,0) -- (1,0) -- (1,0.5) -- (0,0.5) -- cycle;
\node[anchor=center] at (0.5,-0.3) {$I_4$};
\foreach \x in {(0,0.2), (0.8,0), (0.7,0.5)}{
  \fill[color=black] \x circle (3pt);
}
\foreach \x in {(1,0.8), (0.2,1)}{
  \draw[thick,color=black] \x circle (3pt);
}
\foreach \x in {(0.5,0.3), (0.3,0.5), (1, 0.4)}{
  \fill[color=red] \x circle (3pt);
}
\foreach \x in {(0.5,0.7)}{
  \draw[thick,color=red] \x circle (3pt);
}
\end{tikzpicture}
 %
\vspace*{-10pt}
\caption{$\FindSubsets$ implemented by Algorithm \ref{algo:subdivision-cert} with rectangular regions.
The red dots represent the infeasibility certificate $C$. Each $I_s$ excludes at least one point from $C$ by moving one face of the box but keeping the others unchanged.}
\label{fig:rectangular-subdivision-cert}
\vspace*{-0.5cm}
\end{figure}
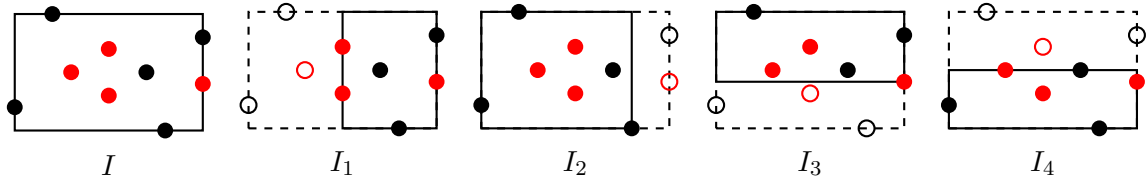


\paragraph{Good infeasibility certificates}
A trivial choice is to use $I$ as infeasibility certificate (since it is not compatible).
Although this is a valid choice, it will lead to an inefficient algorithm. To achieve efficiency,
we seek infeasibility certificates of small cardinality. 
Using the \emph{theorem of alternatives}%
\footnote{This theorem states that if a set of linear inequalities in dimension $n$ is not satisfiable, then there exists an \emph{efficiently computable} subset of $n+1$ of these inequalities that is not satisfiable \citep[Theorem 21.3]{rockafellar1970convex}.}
of Linear Programming, we can obtain a certificate $C$ that contains at most $d+2$ data points.
We also require that the points $\{x_k\}_{k\in C}$ are spatially concentrated (i.e, close to each other under some distance metric).
Indeed, concentration of the points $\{x_k\}_{k\in C}$ around some center point $\xbar$ implies that at least one set $I_1,\ldots,I_S$ produced by Algorithm \ref{algo:subdivision-cert} is small compared to the original index set $I=I(c)$, because $\xbar$ cannot be tight at all components of $H(c)$; this can be seen in Figure \ref{fig:rectangular-subdivision-cert} for rectangular regions.
This approach is described in Appendix \ref{app:findcertificates-concentrated}.

\subsection{Early stopping using set cover algorithms}

Finally, Algorithm \ref{algo:top-down} can be made much more efficient by enabling early termination if $\{1,\ldots,K\}$ is optimally covered by the compatible index sets computed so far.
For that, we add an extra step at the beginning of each iteration, that consists in (i) computing a lower bound $\beta$ on the size of an optimal cover of $\{1,\ldots,K\}$ with compatible index sets; and (ii) checking whether we can extract from $\calS$ a collection of $\beta$ index sets that form a cover of $\{1,\ldots,K\}$.
The extra step returns $\cbreak$ if (ii) is successful.
An implementation of the extra step is provided in Algorithm \ref{algo:extra-step-cover}.

\begin{algorithm2e}[t]
\DontPrintSemicolon
\caption{Extra step at the beginning of each iteration of Algorithm \ref{algo:top-down}}\label{algo:extra-step-cover}
\KwData{$\calS$, $\calU$ and $\calV$ at the iteration, $K$}
\KwResult{$\cbreak$ if we can extract from $\calS$ an optimal cover of $\{1,\ldots,K\}$ with compatible index sets; otherwise, $\ccontinue$}
Let $\alpha$ be the size of an optimal cover of $\{1,\ldots,K\}$ by index sets in $\calS$\;
Let $\beta$ be the size of an optimal cover of $\{1,\ldots,K\}$ by index sets in $\calS\cup(\calU\setminus\calV)$\;
\leIf{$\alpha\leq\beta$}{\Return $\cbreak$}{\Return $\ccontinue$}
\end{algorithm2e}

The soundness of Algorithm \ref{algo:extra-step-cover} follows from the following lemma.

\begin{lemma}\label{lem:lower-bound-cover}%
Let $\beta$ be as in Algorithm \ref{algo:extra-step-cover}.
Then, any cover of $\{1,\ldots,K\}$ with compatible index sets has size at least $\beta$.
\end{lemma}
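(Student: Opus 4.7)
The lemma asserts a lower bound on any cover of $\{1,\ldots,K\}$ by compatible index sets in terms of the best cover using the collection $\calS\cup(\calU\setminus\calV)$ available at the current iteration. The plan is to establish and exploit the following loop invariant of Algorithm \ref{algo:top-down}: at the start of every iteration, every compatible index set $J\subseteq\{1,\ldots,K\}$ is contained in some set $T\in\calS\cup(\calU\setminus\calV)$.

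I would prove the invariant by induction on iterations. For the base case, $\calS=\calV=\emptyset$ and $\calU=\{\{1,\ldots,K\}\}$, so every compatible $J$ is trivially a subset of $\{1,\ldots,K\}\in\calU\setminus\calV$. For the inductive step, let $I$ be the index set picked in the current iteration and assume the invariant holds at its start. If $I$ is compatible, then $I$ is added to $\calS$ and all subsets of $I$ are added to $\calV$; the only sets removed from $\calU\setminus\calV$ are subsets of $I$, but any compatible $J$ previously covered by such a subset is still a subset of $I\in\calS$. If $I$ is not compatible, it is moved to $\calV$ and the new index sets $I_1,\ldots,I_S$ produced by $\FindSubsets$ are added to $\calU$. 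Any compatible $J$ that was previously covered by some $T\neq I$ in $\calS\cup(\calU\setminus\calV)$ is still covered. For compatible $J\subseteq I$, the consistency property (Definition \ref{defn:consistency}) guarantees some $s$ with $J\subseteq I_s$, and $I_s\in\calU\setminus\calV$ after the update. Thus the invariant is preserved.

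With the invariant in hand, the lemma follows quickly. Let $\{J_1,\ldots,J_q\}$ be any cover of $\{1,\ldots,K\}$ by compatible index sets. For each $i$, the invariant yields a set $T_i\in\calS\cup(\calU\setminus\calV)$ with $J_i\subseteq T_i$. Then $\{T_1,\ldots,T_q\}$ is a cover of $\{1,\ldots,K\}$ by index sets drawn from $\calS\cup(\calU\setminus\calV)$, so by definition of $\beta$ we have $\beta\leq q$, which is exactly the claimed bound.

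The main obstacle is the incompatible case of the induction: one must correctly carry out the bookkeeping of which sets enter and leave $\calS\cup(\calU\setminus\calV)$ and check that the consistency property of $\FindSubsets$ is strong enough to rescue every compatible $J\subseteq I$. No other nontrivial ingredients are needed; in particular, the lemma does not require that the $T_i$ themselves be compatible, which is what makes the argument work cleanly.
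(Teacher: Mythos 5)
Your proof takes essentially the same route as the paper's: both reduce the lemma to the invariant that every compatible index set is contained in some member of $\calS\cup(\calU\setminus\calV)$, and then turn an arbitrary cover by compatible index sets into a cover by sets from that collection, invoking the minimality of $\beta$. The only difference is that you prove the invariant inline by induction on iterations (which is sound, modulo spelling out the bookkeeping corner case where an $I_s$ returned by $\FindSubsets$ already lies in $\calV$, in which case a compatible superset of it already sits in $\calS$), whereas the paper simply cites the invariant as an observation extracted from the proof of Theorem~\ref{thm:algo-top-down-sound}.
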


\begin{proof}
The crux of the proof relies on the observation from the proof of Theorem~\ref{thm:algo-top-down-sound} that 
for any compatible index set $I\in\calI$, there is $J\in\calS\cup(\calU\setminus\calV)$ such that $I\subseteq J$.
It follows  that for any cover of $\{1,\ldots,K\}$ with compatible index sets, there is a cover of $\{1,\ldots,K\}$ with index sets in $\calS\cup(\calU\setminus\calV)$.
Since $\beta$ is the smallest size of such a cover, this concludes the proof of the lemma.
\end{proof}

The implementation of the extra step in Algorithm \ref{algo:top-down} is provided in Algorithm \ref{algo:top-down-early}.
The correctness of the algorithm follows from that of Algorithm \ref{algo:top-down} (Theorem \ref{thm:algo-top-down-sound}) and Algorithm \ref{algo:extra-step-cover} (Lemma \ref{lem:lower-bound-cover}).

\begin{algorithm2e}[t]
\DontPrintSemicolon
\caption{Top-down algorithm for Problem \ref{prob:PWAR}.}\label{algo:top-down-early}
[\dots]\tcp*[l]{same as in Algorithm \ref{algo:top-down}}
\While{true}{
  \lIf{Algorithm \ref{algo:extra-step-cover} outputs $\cbreak$}{
    \Return an optimal cover of $\{1,\ldots,K\}$ using index sets from $\calS$
  }
  [\dots]\tcp*[l]{same as in Algorithm \ref{algo:top-down}}
}
\end{algorithm2e}



\begin{theorem}[Optimal TPWA regression]\label{algo:algo-top-down-early-sound}%
Algorithm \ref{algo:top-down-early} solves Problem \ref{prob:PWAR} with minimal $q$.
\end{theorem}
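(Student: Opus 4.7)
The plan is to derive the theorem from the correctness of Algorithm \ref{algo:top-down} (Theorem \ref{thm:algo-top-down-sound}) and Lemma \ref{lem:lower-bound-cover}, by reducing Problem \ref{prob:PWAR} to the problem of covering $\{1,\ldots,K\}$ by compatible index sets. First, I would record a one-to-one correspondence: any TPWA model $\{(H_i,f_i)\}_{i=1}^q$ satisfying \eqref{equat:error} gives compatible index sets $I(c_i)=\{k:x_k\in H(c_i)\}\in\calI$ (with $H_i=H(c_i)$) that together cover $\{1,\ldots,K\}$; conversely, any cover of $\{1,\ldots,K\}$ by compatible index sets $I(c_1),\dots,I(c_q)$ induces a TPWA model by taking $H_i=H(c_i)$ and, for each $i$, any affine function $f_i$ fitting the points of $I(c_i)$ within tolerance $\epsilon$ (which exists by compatibility). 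Hence the minimum number of pieces in Problem \ref{prob:PWAR} equals the minimum size of a cover of $\{1,\ldots,K\}$ by compatible index sets, and it suffices to show that Algorithm \ref{algo:top-down-early} returns a cover of this minimum size.

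Next I would establish termination. The lattice $\calI$ is finite (its cardinality is bounded by $O(K^h)$, as used in the proof of Theorem \ref{thm:poly-datasize}). In each iteration in which the extra step does not trigger $\cbreak$, the body inherited from Algorithm \ref{algo:top-down} moves the picked $I$ (and possibly more) into $\calV$, so $|\calI\setminus\calV|$ strictly decreases. Hence the body can execute at most $|\calI|$ times before $\calU\setminus\calV$ becomes empty; at that point, by Theorem \ref{thm:algo-top-down-sound}, $\calS$ is exactly the collection of all maximal compatible index sets, so $\alpha=\beta$ trivially and Algorithm \ref{algo:extra-step-cover} returns $\cbreak$. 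Since singletons $\{k\}$ are always compatible, $\alpha$ and $\beta$ are finite as soon as $\calS$ or $\calS\cup(\calU\setminus\calV)$ covers $\{1,\ldots,K\}$, which is guaranteed because the maximal compatible index sets do (and they ultimately end up in $\calS$).

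Third, I would argue optimality. Whenever $\cbreak$ is triggered, we have $\alpha\leq\beta$ by the test; conversely $\beta\leq\alpha$ because $\calS\subseteq\calS\cup(\calU\setminus\calV)$, so $\alpha=\beta$. By Lemma \ref{lem:lower-bound-cover}, any cover of $\{1,\ldots,K\}$ by compatible index sets has size at least $\beta=\alpha$. The cover returned by the algorithm has size exactly $\alpha$ and is composed of compatible index sets drawn from $\calS$, hence it is of minimum size among compatible covers. Through the correspondence established in the first paragraph, this cover realizes a TPWA model with the minimum number of pieces, proving the theorem.

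The main obstacle is the lower-bound half of optimality: one must be confident that at every moment the ``pool'' $\calS\cup(\calU\setminus\calV)$ retains enough sets to dominate every compatible cover, so that $\beta$ is a true lower bound rather than an artefact of what has been explored so far. This is precisely the invariant used in the proof of Lemma \ref{lem:lower-bound-cover} (inherited from the invariant in the proof of Theorem \ref{thm:algo-top-down-sound}): every compatible index set is a subset of some element of $\calS\cup(\calU\setminus\calV)$ throughout the execution. Once that invariant is in hand, the remainder of the proof is straightforward bookkeeping combining the break condition with the compatible-cover/TPWA correspondence.
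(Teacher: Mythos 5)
Your proof is correct and follows essentially the same route as the paper's: identify solutions of Problem~\ref{prob:PWAR} with covers of $\{1,\ldots,K\}$ by compatible index sets, then combine the break condition ($\alpha=\beta$) with Lemma~\ref{lem:lower-bound-cover} to conclude that the returned cover is of minimum size (you additionally spell out termination, which the paper delegates to Theorem~\ref{thm:algo-top-down-sound}). The only nit is your aside that singletons $\{k\}$ are always compatible: condition (a) of Definition~\ref{defin:max-comp-region} requires $\{k\}\in\calI$, which can fail (e.g., if $p(x_{k'})\leq p(x_k)$ componentwise for some $k'\neq k$), so the finiteness of $\alpha$ and $\beta$ should instead be justified by the standing assumption that Problem~\ref{prob:PWAR} admits a solution, as you also note.
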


\begin{proof}
Let $I_1,\ldots,I_q$ be the output of Algorithm \ref{algo:top-down-early}.
For each $i$, let $H_i=H(c_i)$ where $I_i=I(c_i)$ and let $f_i(x)=A_ix+b_i$ be as in (b) of Definition \ref{defin:max-comp-region}.
The fact that $H_1,\ldots,H_q$ and $f_1,\ldots,f_q$ is a solution to Problem \ref{prob:PWAR} follows from the fact that $I_1,\ldots,I_q$ is a cover of $\{1,\ldots,K\}$ and the definition of $f_1,\ldots,f_q$.
The fact that it is a solution with minimal $q$ follows from the optimality of $I_1,\ldots,I_q$ among all covers of $\{1,\ldots,K\}$ with compatible index sets.
\end{proof}

\begin{remark}
To solve the optimal set cover problems (which are NP-hard) in Algorithm \ref{algo:extra-step-cover}, we use MILP formulations.
The complexity of solving these problems grows exponentially with the size of $\calS$ and $\calS\cup(\calU\setminus\calV)$, respectively.
However, in our numerical experiments (Section \ref{sec:experiments}), we observed that the gain of stopping the algorithm early (if an optimal cover is found) systematically outbalanced the computational cost of solving the set cover problems.
\end{remark}

\section{Numerical Experiments}\label{sec:experiments}
\subsection{PWA approximation of insulin--glucose regulation model}\label{subsection:exa-glucose}
 \citet{dalla2007meal} present a nonlinear model of insulin--glucose regulation that has been 
 widely used to test artificial pancreas devices for treatment of type-1 diabetes. 
The model is nonlinear and involves $10$ state variables. However, the nonlinearity
arises mainly from the term $\Uid$ (insulin-dependent glucose utilization) involving two state variables, say $x_1$ and $x_2$ (namely, the level of insulin in the interstitial fluid, and the glucose mass in rapidly equilibrating tissue):
\[
\Uid(x_1,x_2) = \frac{(3.2667 + 0.0313x_1)x_2}{253.52 + x_2}.
\]
We consider the problem of approximating $\Uid$ with a PWA model, thus converting the entire model into a PWA model. 
Therefore, we simulated trajectories and collected $K=100$ values of $x_1$, $x_2$ and $\Uid(x_1,x_2)$; see Figure \ref{fig:dallaman}\emph{(a)}.
For three different values of the error tolerance, $\epsilon\in\{0.2,0.1,0.05\}$, we used Algorithm \ref{algo:top-down-early} to compute a PWA regression of the data with rectangular domains.
The results of the computations are shown in Figure \ref{fig:dallaman}\emph{(b,c,d)}.
The computation times are respectively $1$, $22$ and $112$ secs%
\footnote{On a laptop with Intel Core i7-7600u and 16 GB RAM running Windows, using Gurobi\textsuperscript{TM} as (MI)LP solver.}.
Finally, we evaluate the accuracy of the PWA regression for the modeling of the glucose-insulin evolution by simulating the system with $\Uid$ replaced by the PWA models.
The results are shown in Figure \ref{fig:dallaman}\emph{(e,f)}.
We see that the PWA model with $\epsilon=0.05$ induces a prediction error less than $2\,\%$ over the whole simulation interval, which is a significant improvement compared to the PWA models with only $1$ affine piece ($\epsilon=0.2$) or $2$ affine pieces ($\epsilon=0.1$).

Finally, we compare with switched affine regression and classical PWA regression.
To find a switched affine model, we solved Problem \ref{prob:SAR} with $\epsilon=0.05$ and $q=3$ using a MILP approach.
The computation is very fast ($<0.5$ secs); however, the computed clusters of data points (see Figure \ref{fig:dallaman-milp} in Appendix \ref{app:supplementary-material}) do not allow to learn a PWA model, thereby hindering the derivation of a model for $\Uid$ that can be used for simulation and analysis.

\begin{figure}[t]
\centering
\subfigure[$\Uid$]{\includegraphics[width=0.245\textwidth]{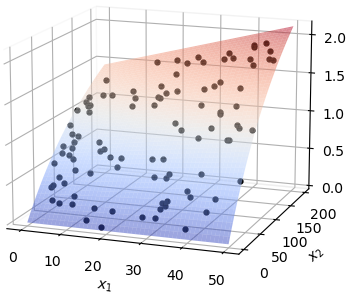}}
\hfill
\subfigure[$\epsilon=0.2$]{\includegraphics[width=0.245\textwidth]{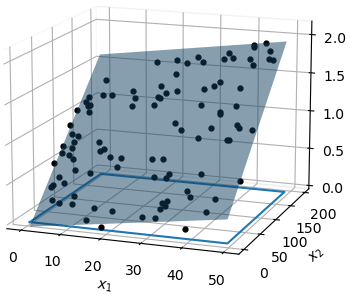}}
\hfill
\subfigure[$\epsilon=0.1$]{\includegraphics[width=0.245\textwidth]{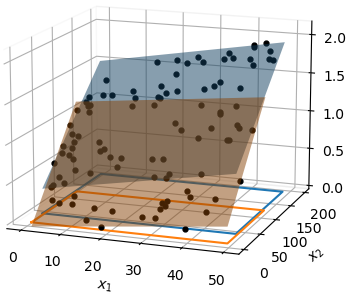}}
\hfill
\subfigure[$\epsilon=0.05$]{\includegraphics[width=0.245\textwidth]{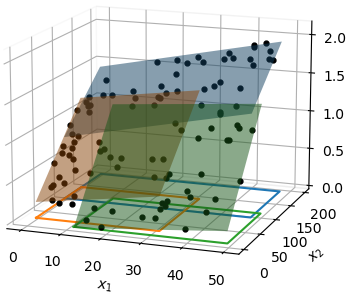}}
\subfigure[Simulated trajectories]{\includegraphics[width=0.45\textwidth]{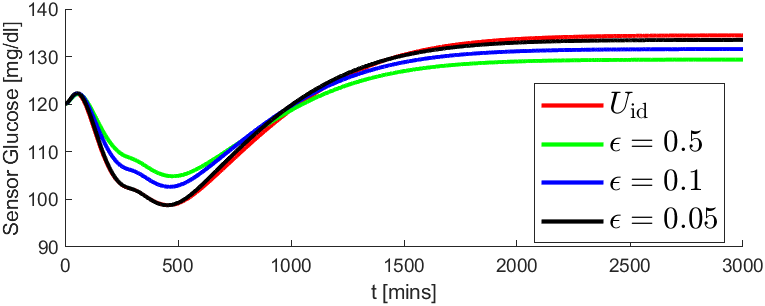}}
\hfill
\subfigure[Average error over 50 simulations]{\includegraphics[width=0.45\textwidth]{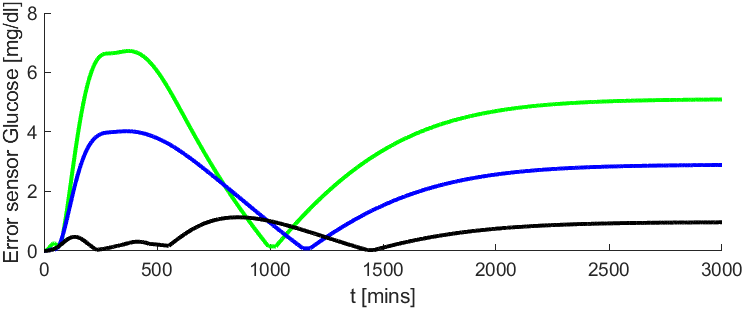}}
\caption{Glucose--insulin system.
\emph{(a)}:  $100$ sampled points (black dots) on the graph of $\Uid$ (surface).
\emph{(b)}, \emph{(c)}, \emph{(d)}: Optimal TPWA regression for various error tolerances $\epsilon$.
\emph{(e)}: Simulations using the nonlinear model versus the PWA approximations.
\emph{(f)}: Error between nonlinear and PWA models averaged over $50$ simulations with different initial conditions.}
\label{fig:dallaman}
\vspace*{-0.5cm}
\end{figure}

\subsection{Hybrid system identification: double pendulum with soft contacts}\label{subsection:exa-pendulum}

\begin{figure}[t]
\centering
\subfigure[Schematic]{%
\begin{tikzpicture}[scale=1.5]
\pgfmathsetmacro{\rad}{20}
\pgfmathsetmacro{\sad}{-10}
\coordinate (m1) at (0, 0.3);
\draw[draw=black,thick] (-1, 0) -- (1, 0);
\draw[draw=black,thick,fill,black,pattern=north east lines] (-0.5, 0) -- (0.5, 0) -- (m1) -- cycle;
\filldraw[fill=gray!30,draw=white] (m1) -- ([shift=(150:0.8cm)] m1) -- ([shift=(30:0.8cm)] m1) -- cycle;
\draw[dashed] (m1) -- ([shift=(90:0.9cm)] m1);
\draw ([shift=(90:0.45cm)] m1) -- node[pos=0,right] {$\theta_1$} ([shift=(90+\rad:0.45cm)] m1);
\coordinate (m2) at ([shift=(90+\rad:1cm)] m1);
\filldraw[fill=gray!30,draw=white] (m2) -- ([shift=(130+\rad:0.7cm)] m2) -- ([shift=(50+\rad:0.7cm)] m2) -- cycle;
\draw[dashed] (m2) -- ([shift=(90+\rad:0.9cm)] m2);
\draw ([shift=(90+\rad:0.45cm)] m2) -- node[pos=0,left] {$\theta_2$} ([shift=(90+\sad:0.45cm)] m2);
\coordinate (m3) at ([shift=(90+\sad:0.9cm)] m2);
\draw[line width=1pt] (m1) -- (m2);
\draw[line width=1pt] (m2)  -- (m3);
\draw[fill=black] (m2) circle (0.09);
\draw[fill=black] (m3) circle (0.09);
\draw[dashed] (m1) -- ([shift=(150:0.79cm)] m1);
\draw[dashed] (m1) -- ([shift=(30:0.79cm)] m1);
\draw[dashed] (m2) -- ([shift=(130+\rad:0.69cm)] m2);
\draw[dashed] (m2) -- ([shift=(50+\rad:0.69cm)] m2);
\end{tikzpicture}%
}
\hfill
\subfigure[TPWA regression $\epsilon=0.01$]{\includegraphics[width=0.35\textwidth]{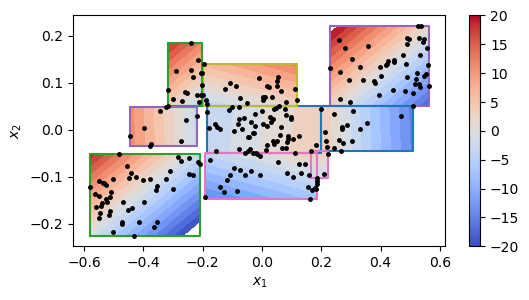}}
\hfill
\subfigure[Computation times]{\includegraphics[width=0.3\textwidth]{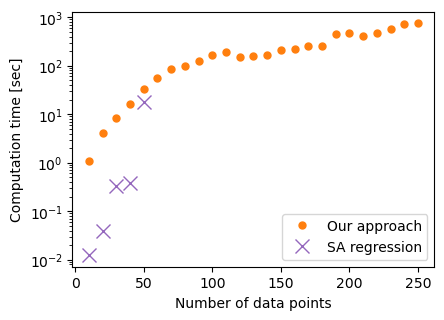}}
\caption{Inverted double pendulum with soft contacts.
\emph{(a)}: Elastic contact forces apply when $\theta$ is outside gray region,
\emph{(b)}: Optimal TPWA regression of the data with rectangular domains.
\emph{(c)}: Comparison with MILP approach for SA regression.
Time limit is set to $1000$ secs.}
\label{fig:pendulum}
\vspace*{-0.5cm}
\end{figure}

We consider a hybrid linear system consisting in an inverted double pendulum with soft contacts at the joints, as depicted in Figure \ref{fig:pendulum}\emph{(a)}.
This system has nine linear modes, depending on whether the contact force of each joint is inactive, active on the left or active on the right (see \citealp{aydinoglu2020contactaware}).
Our goal is to learn these linear modes as well as their domain of validity, from data.
For that, we simulated trajectories and collected $K=250$ sampled values of $\theta_1$, $\theta_2$ and the force applied on the lower joint.
We used Algorithm \ref{algo:top-down-early} to compute a PWA regression of the data with rectangular domains and with error tolerance $\epsilon=0.01$.
The result is shown in Figure \ref{fig:pendulum}\emph{(b)}.
The number of iterations of the algorithm was about $23000$ for a total time of $800$ secs.

We see that the affine pieces roughly divide the state space into a grid of $3\times 3$ regions.
This is consistent with our ground truth model, in which the contact force at each joint has three linear modes depending only on the angle made at the joint.
The PWA regression provided by Algorithm \ref{algo:top-down-early} allows us to learn this feature of the system from data, without assuming anything about the system except that the domains of the affine pieces are rectangular.

Finally, we compare with switched affine (SA) regression and classical PWA regression.
The MILP approach to solve the SA regression (Problem \ref{prob:SAR}) with $\epsilon=0.01$ and $q=9$ could not handle more than $51$ data points within reasonable time ($1000$ secs); see Figure \ref{fig:pendulum}\emph{(c)}.
Furthermore, the computed clusters of data points (see Figure \ref{fig:pendulum-milp} in Appendix \ref{app:supplementary-material}) do not allow to learn a PWA model, thereby hindering to learn important features of the system.

\section*{Conclusion}

We have introduced the template-based piecewise affine regression problem,
analyzed its computational complexity and provided a top-down algorithm  based on infeasibility certificates.
Numerical examples show that the algorithm compares favorably to state-of-the-art approaches for PWA regression.
In future work, we plan to study extensions to a larger class of shapes for the domains while investigating connections to  approximation algorithms for geometric set cover problems.

\acks{ This research was funded in part by the Federation Wallonie--Bruxelles (WBI) and the US National Science Foundatton (NSF) under award numbers 1836900 and 1932189.}

\bibliography{myrefs.bib}

\clearpage
\appendix

\section{Proof of NP-Hardness}\label{app:np-hardness-proof}
Given two vectors or matrices $u$ and $v$, their horizontal (resp.\ vertical) concatenation is denoted by $[u,v]$ (resp.\ $[u;v]$).
For positive integers $d$ and $e$ and a scalar $\alpha$, we denote by $[\alpha]_d$ (resp.\ $[\alpha]_{e,d}$) the vector in $\Re^d$ (resp.\ matrix in $\Re^{e\times d}$) whose components are all equal to $\alpha$.

\bigskip

\begin{proofwithname}[Proof of Theorem \ref{thm:np-hard}:]
For the simplicity of notation, we will restrict here to piecewise \emph{linear} models (i.e., with $f_i(x)=A_ix$) since PWA models can be obtained from linear ones by augmenting each data point $x_k$ with a component equal to $1$, i.e., $x_k\gets[x_k;[1]_1]\in\Re^{d+1}$.

We will reduce Problem \ref{prob:SAR} to Problem \ref{prob:PWAR}.
Therefore, consider an instance of Problem \ref{prob:SAR} consisting in a data set $\calD=\{(x_k,y_k)\}_{k=1}^K\subseteq\Re^d\times\Re^e$ and tolerance $\epsilon$.
From $\calD$, we build another data set $\calD'\subseteq\Re^{d+K}\times\Re^e$ with $\lvert\calD'\rvert=4K$ as follows.
For each $1\leq k\leq K$, we let $\chi_k\in\Re^K$ be the indicator vector of the $k$\textsuperscript{th} component.
We define
\[
\calD'=\bigcup_{\sigma\in\{-1,1\}}\bigcup_{k=1}^K\big[\{([\sigma x_k;\chi_k],\sigma y_k),([[0]_d;\chi_k],[\sigma\epsilon]_e)\}\big].
\]
Also, we let $p$ be the rectangular template in $\Re^{d+K}$, which is linear with $\size(p)=2(d+K)^2$.

\emph{Main step:} We show that Problem \ref{prob:SAR} with $\calD$, $\epsilon$ and $q=2$ has a solution iff Problem \ref{prob:PWAR} with $\calD'$, $p$, $\epsilon$ and $q=2$ has a solution.

\emph{Proof of ``if direction''.}
Assume that Problem \ref{prob:PWAR} has a solution given by $H_1,H_2\subseteq\Re^{d+K}$ and $A_1,A_2\in\Re^{e\times(d+K)}$, and for each $i$, decompose $A_i=[B_i,C_i]$, wherein $B_i\in\Re^{e\times d}$ and $C_i\in\Re^{e\times K}$.
We will show that $B_1,B_2$ provide a solution to Problem \ref{prob:SAR}.

Therefore, fix $1\leq k\leq K$.
Using the pigeon-hole principle, let $i\in\{1,2\}$ be such that at least two points in $\{[x_k;\chi_k],[-x_k;\chi_k],[[0]_d;\chi_k]\}$ belong to $H_i$.
Then, by the convexity of $H_i$, it holds that $[[0]_d;\chi_k]\in H_i$.
For definiteness, assume that $[x_k;\chi_k]\in H_i$.
Since $H_1,H_2$ and $A_1,A_2$ provide a solution to Problem \ref{prob:PWAR}, it follows that
\[
\lVert y_k - B_ix_k - C_i\chi_k \rVert_\infty \leq \epsilon, \quad \lVert [\epsilon]_e - C_i\chi_k \rVert_\infty \leq \epsilon, \quad \lVert [-\epsilon]_e - C_i\chi_k \rVert_\infty \leq \epsilon.
\]
The last two conditions imply that $C_i\chi_k=0$, so that $\lVert y_k - B_ix_k \rVert_\infty \leq \epsilon$.
Since $k$ was arbitrary, this shows that $B_1,B_2$ provide a solution to Problem \ref{prob:SAR}; thereby proving the ``if direction''.

\emph{Proof of ``only if direction''.}
Assume that Problem \ref{prob:SAR} has a solution given by $A_1,A_2\in\Re^{e\times d}$.
For each $1\leq k\leq K$, define the intervals $I_{1,k},I_{2,k}\subseteq\Re$ as follows: $I_{i,k}=[0,1]$ if $\lVert y_k-A_ix_k\rVert_\infty\leq\epsilon$, and $I_{i,k}=\{0\}$ otherwise.
Now, define the rectangular regions $H_1,H_2\subseteq\Re^{d+K}$ as follows: $H_i=\Re^d\times I_{i,1}\times\cdots\times I_{i,K}$.
Also define the matrices $B_1,B_2\in\Re^{e\times(d+K)}$ as follows: $B_i=[A_i,[0]_{e,K}]$.
We will show that $H_1,H_2$ and $B_1,B_2$ provide a solution to Problem \ref{prob:PWAR}.

Therefore, fix $1\leq k\leq K$ and $i\in\{1,2\}$.
First, assume $\lVert y_k-A_ix_k\rVert_\infty\leq\epsilon$.
We show that (a) $[x_k;\chi_k]$, $[-x_k;\chi_k]$ and $[[0]_d;\chi_k]$ belong to $H_i$, and (b)
\[
\lVert y_k - B_i[x_k;\chi_k] \rVert_\infty \leq \epsilon, \quad \lVert -y_k - B_i[-x_k;\chi_k] \rVert_\infty \leq \epsilon, \quad \lVert [\pm\epsilon]_e - B_i[[0]_d;\chi_k] \rVert_\infty \leq \epsilon.
\]
This is direct (a) by the definition of $I_{i,k}$, and (b) by the definition of $B_i$.
Now, assume that $\lVert y_k-A_ix_k\rVert_\infty\leq\epsilon$ does not hold.
We show that $[x_k;\chi_k]$, $[-x_k;\chi_k]$ do not belong to $H_i$.
This is direct since $1\notin I_{i,k}$.
Thus, we have shown that $H_1,H_2$ and $B_1,B_2$ provide a solution to Problem \ref{prob:PWAR}; thereby proving the ``only if direction''.

Hence, we have built a polynomial reduction from Problem \ref{prob:SAR} to Problem \ref{prob:PWAR}.
Since Problem \ref{prob:SAR} is NP-hard \citep[\S 5.2.4]{lauer2019hybrid}, this shows that Problem \ref{prob:PWAR} is NP-hard as well.
\end{proofwithname}

\begin{remark}
The reduction from Problem \ref{prob:SAR} to Problem \ref{prob:PWAR} in the above proof relies on the fact that $q=2$.

First, the fact that Problem \ref{prob:PWAR} is NP-hard with $q=2$ implies that Problem \ref{prob:PWAR} is NP-hard with any $q\geq 2$.
Indeed, if Problem \ref{prob:PWAR} can be solved in polynomial time for some $q=\qhat>2$, then one can add spurious data points (e.g., at a far distance of the original data points) to enforce the value of $\qhat-2$ affine pieces of the PWA function.
The satisfiability of Problem \ref{prob:PWAR} with $q=\qhat$ and the augmented data set is then equivalent to the satisfiability of Problem \ref{prob:PWAR} with $q=2$ and the original data set.

Second, given $\qhat\geq2$ and any template $p$, a construction similar to the one used in the above proof can be used to reduce Problem \ref{prob:SAR} to Problem \ref{prob:PWAR} at the cost of introducing a small gap in the reduction.
Indeed, fix $\lambda>0$ and consider the data set $\calD'=\bigcup_{t=1}^{q+1}\{([x_k;t\lambda\chi_k],y_k)\}_{k=1}^K$.
Then, one can show that if Problem \ref{prob:PWAR} with $\calD'$, $p$, $\epsilon=\epsilonhat(1-\frac2\lambda)$ and $q=\qhat$ has a solution, then Problem \ref{prob:SAR} with $\calD$, $\epsilon=\epsilonhat$ and $q=\qhat$ has a solution.
The gap corresponds to the factor $1-\frac2\lambda$, which can be made arbitrarily close to one.
\end{remark}

\section{Proof of Polynomial Complexity Bound}\label{app:polynomial-complexity-in-K}
\begin{proofwithname}[Proof of Theorem \ref{thm:poly-datasize}:]
The crux of the proof is to realize that $\lvert\calI\rvert\leq K^h+1$.

For every $c\in\Re^h$, define $P(c)=\{p(x_k) : 1\leq k\leq K,\:p(x_k)\leq c\}$ and let $\calP=\{P(c):c\in\Re^h\}$.
It holds that $\lvert\calP\rvert\leq K^h+1$.
Furthermore, there is a one-to-one correspondence between $\calP$ and $\calI$ given by: $P(c)\mapsto I(c)$.
Indeed, it is clear that if $I(c_1)=I(c_2)$, then $P(c_1)=P(c_2)$.
On the other hand, if $I(c_1)\nsubseteq I(c_2)$, then there is at least one $k$ such that $p(x_k)\leq c_1$ but $p(x_k)\nleq c_2$.
This implies that $P(c_1)\nsubseteq P(c_2)$.
Therefore, $\lvert\calP\rvert = \lvert\calI\rvert \in O(K^h)$.

Now, Problem \ref{prob:PWAR} can be solved by enumerating the $L=K^h$ nonempty index sets $I_1,\ldots,I_L$ in $\calI$, and keeping only those $I_\ell$ for which we can fit an affine function over the data $\{(x_k,y_k)\}_{k\in I_\ell}$ with error bound $\epsilon$.
Next, we enumerate all combinations of $q$ such index sets that cover the indices $\{1,\ldots,K\}$.
There are at most $L^q$ such combinations.
This concludes the proof of the theorem.
\end{proofwithname}

\section{Maximal Compatible Index Sets}\label{app:maximal-sufficient}
\begin{lemma}
Let $q$ be given.
Problem \ref{prob:PWAR} has a solution iff it has a solution wherein the regions correspond to maximal compatible index sets.
\end{lemma}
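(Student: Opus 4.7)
The ($\Leftarrow$) direction is immediate: a solution using maximal compatible index sets is, in particular, a solution to Problem~\ref{prob:PWAR}. So the work is in the ($\Rightarrow$) direction, where I would transform any given solution into one whose pieces correspond to maximal compatible index sets while preserving $q$.

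My plan is as follows. Take any solution $(H_1,f_1),\ldots,(H_q,f_q)$ to Problem~\ref{prob:PWAR}. By the template assumption, each $H_i=H(c_i)$ for some $c_i\in\Re^h$, so the index set $I_i=I(c_i)$ is well defined. I would first observe that each $I_i$ is compatible in the sense of Definition~\ref{defin:max-comp-region}: clearly $I_i\in\calI$ by construction, and property (b) is witnessed precisely by $f_i$, because for every $k\in I_i$ we have $x_k\in H_i$ and hence $\lVert y_k-f_i(x_k)\rVert_\infty\leq\epsilon$ by \eqref{equat:error}. Next, using that $\calI$ is finite (as shown in the proof of Theorem~\ref{thm:poly-datasize}), the set of compatible index sets is finite and partially ordered by inclusion, so each $I_i$ extends to a maximal compatible index set $\Itilde_i\supseteq I_i$. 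By definition of compatibility, there exist $\ctilde_i\in\Re^h$ and an affine function $\ftilde_i(x)=\Atilde_i x+\btilde_i$ such that $\Itilde_i=I(\ctilde_i)$ and $\lVert y_k-\ftilde_i(x_k)\rVert_\infty\leq\epsilon$ for all $k\in\Itilde_i$.

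I would then claim that $(H(\ctilde_1),\ftilde_1),\ldots,(H(\ctilde_q),\ftilde_q)$ is a solution to Problem~\ref{prob:PWAR} whose regions correspond to maximal compatible index sets. The covering condition in \eqref{equat:error} follows from $\bigcup_i \Itilde_i \supseteq \bigcup_i I_i = \{1,\ldots,K\}$, since the original tuple was a solution. The error condition is exactly what the compatibility of each $\Itilde_i$ guarantees: if $x_k\in H(\ctilde_i)$, then by definition $k\in I(\ctilde_i)=\Itilde_i$, and hence $\lVert y_k-\ftilde_i(x_k)\rVert_\infty\leq\epsilon$.

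The only potentially delicate step is the existence of the maximal extension $\Itilde_i$, but this is handled by finiteness of $\calI$ rather than by Zorn-style arguments, so no real obstacle arises. The rest is bookkeeping between the template parameterization $c\mapsto H(c)$ and the induced index sets $c\mapsto I(c)$, together with the observation that enlarging each index set can only add covered points, never create violations of the error bound among points assigned to piece $i$.
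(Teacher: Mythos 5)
Your proof is correct and follows essentially the same route as the paper's: both arguments take an arbitrary solution, observe that each region's induced index set is compatible (witnessed by the given affine piece), extend it to a maximal compatible index set, and note that enlarging index sets preserves the cover while the compatibility witnesses supply the required affine functions. The paper states this more tersely, and your version merely makes explicit the finiteness argument for the existence of the maximal extension.
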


\begin{proof}
The ``if direction'' is clear.
We prove the ``only if direction''.
Consider a solution of Problem \ref{prob:PWAR} with regions $H_1,\ldots,H_q$.
For each $1\leq i\leq q$, there is a maximal compatible index set $I_i=I(c_i)$ such that $H_i\cap\{x_k\}_{k=1}^K\subseteq H(c_i)$.
Since $\{x_k\}_{k=1}^K\subseteq\bigcup_{i=1}^qH_i$, it holds that $\{x_k\}_{k=1}^K\subseteq\bigcup_{i=1}^qH(c_i)$.
Hence, $H(c_1),\ldots,H(c_q)$, along with affine functions $f_i(x)=A_ix+b_i$ satisfying (b) in Definition \ref{defin:max-comp-region}, provide a solution to Problem \ref{prob:PWAR}, concluding the proof.
\end{proof}

\section{Correctness of Top-Down Algorithm}\label{app:top-down-soundness-proof}
\begin{proofwithname}[Proof of Theorem~\ref{thm:algo-top-down-sound}]
Termination follows from the fact that each index set $I\in\calI$ is picked at most once, because when some $I\in\calI$ is picked, it is then added to the collection $\calV$ of visited index sets, so that it cannot be picked a second time.
Since $\calI$ is finite, this implies that the algorithm terminates in a finite number of steps.

Now, we prove that, upon termination, any maximal compatible index set is in the output $\calS$ of the algorithm.
Therefore, let $J$ be a maximal compatible index set.
Then, among all sets $I$ picked during the execution of the algorithm and satisfying $J\subseteq I$, let $I^*$ have minimal cardinality.
Such an index set exists since $J\subseteq\{1,\ldots,K\}$.
We will show that:

\emph{Main result.}
$I^*=J$.

\emph{Proof of main result.}
For a proof by contradiction, assume that $I^*\neq J$.
Since $J$ is maximal and $J\subsetneq I^*$, $I^*$ is not compatible.
Hence, the index sets $(I_1,\ldots,I_S)=\FindSubsets(I^*)$ were added to $\calU$.
Using the assumption on $\FindSubsets$, let $s$ be such that $J\subseteq I_s\subsetneq I^*$.
Since $I_s$ must have been picked during the execution of the algorithm, this contradicts the minimality of the cardinality of $I^*$, concluding the proof of the main result.

Thus, $J$ was picked during the execution of the algorithm.
Since it is compatible, it was added to $\calS$ at the iteration at which it was picked, and since it is maximal, it is not removed at later iterations.
Hence, upon termination, $J\in\calS$.
Since $J$ was arbitrary, this concludes the proof that, upon termination, $\calS$ contains all maximal compatible index sets.

Finally, we show that, upon termination, $\calS$ contains only maximal compatible index sets.
This follows from the fact that, at each iteration of the algorithm, for any distinct $I_1,I_2\in\calS$, it holds that $I_1\nsubseteq I_2$ and $I_2\nsubseteq I_1$.
Indeed, when $I_1$ is added to $\calS$, all subsets of $I_1$ are removed from $\calS$ and are added to $\calV$ so that they are not picked at later iterations.
The same holds for $I_2$.
This concludes the proof of the theorem.
\end{proofwithname}

\section{Correctness of \FindSubsets}\label{app:correctness-findsubsets}
\begin{lemma}\label{lem:certificate-compatible}%
If $C$ is an infeasibility certificate, then every $I\subseteq\{1,\ldots,K\}$ satisfying $C\subseteq I$ is not compatible.
\end{lemma}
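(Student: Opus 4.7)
The plan is to argue by contrapositive: I will show that if some $I \supseteq C$ were compatible, then the affine function certifying its compatibility would also witness that $C$ is \emph{not} an infeasibility certificate, contradicting the hypothesis on $C$.

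Concretely, suppose for contradiction that $I \subseteq \{1,\ldots,K\}$ satisfies $C \subseteq I$ and is compatible. By condition (b) of Definition \ref{defin:max-comp-region}, there exists an affine function $f(x) = Ax + b$ such that $\|y_k - f(x_k)\|_\infty \leq \epsilon$ for every $k \in I$. Since $C \subseteq I$, the same inequality holds for every $k \in C$. But this directly contradicts the assumption that $C$ is an infeasibility certificate, which requires that no such affine function exists over $C$.

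There is no real obstacle here; the lemma is essentially a tautology once one unwinds the two definitions. The only thing to be careful about is that the definition of \emph{compatible} additionally demands $I \in \calI$ (i.e., $I$ is realizable by the template), whereas being an infeasibility certificate is purely a statement about the existence of a fitting affine function and imposes no template condition. This asymmetry is harmless for the proof, since we only use the affine-fit part of compatibility; the template condition $I \in \calI$ is not needed in the argument.
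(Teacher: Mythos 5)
Your proof is correct and matches the paper's argument, which simply cites condition (b) of Definition \ref{defin:max-comp-region} as making the claim straightforward; you have merely written out the one-line contradiction in full. Your remark that the template condition $I\in\calI$ plays no role is accurate and consistent with the paper's reasoning.
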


\begin{proof}
Straightforward from (b) in Definition \ref{defin:max-comp-region}.
\end{proof}

\begin{theorem}[Correctness of Algorithm \ref{algo:subdivision-cert}]\label{lem:subdivision-cert-sound}%
For every non-compatible index set $I\in\calI$, the output $I_1,\ldots,I_S$ of Algorithm \ref{algo:subdivision-cert} is consistent w.r.t.\ $I$.
\end{theorem}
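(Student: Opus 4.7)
The plan is to verify the two requirements of Definition \ref{defn:consistency} separately, using the explicit form of the tightened offsets $\chat^s$ computed by Algorithm \ref{algo:subdivision-cert}.

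For requirement (a), I would first observe that since $C \subseteq I = I(c)$, every $\ell \in C$ satisfies $p^s(x_\ell) \leq c^s$, hence $m_s := \max_{\ell \in C} p^s(x_\ell) \leq c^s$. By construction $\chat^s$ is the maximum over those $k \in I$ with $p^s(x_k) < m_s$, so $\chat^s < m_s \leq c^s$. Any $\ell \in C$ attaining the value $m_s$ lies in $I$ but violates $p^s(x_\ell) \leq \chat^s$, hence $\ell \notin I_s$. Therefore whenever $I_s$ is nonempty it is a strict subset of $I$, which is what (a) demands of the returned sets.

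For requirement (b), let $J \subseteq I$ be compatible. Since $C$ is an infeasibility certificate, Lemma \ref{lem:certificate-compatible} forbids $C \subseteq J$, so fix some $\ell^* \in C \setminus J$. Because $J \in \calI$, write $J = I(c')$; then $p(x_{\ell^*}) \not\leq c'$, so there is a coordinate $s$ with $p^s(x_{\ell^*}) > (c')^s$. For every $k \in J$ I then have the chain $p^s(x_k) \leq (c')^s < p^s(x_{\ell^*}) \leq m_s$, which places $k$ in the set defining $\chat^s$ (since also $k \in I$) and yields $p^s(x_k) \leq \chat^s$. Combined with $p^t(x_k) \leq c^t$ for $t \neq s$, which follows from $J \subseteq I(c)$, this shows $k \in I_s$, i.e., $J \subseteq I_s$. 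In particular $I_s$ is nonempty whenever $J$ is, so it survives the final pruning step.

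Geometrically, the key point is that separating a compatible subset $J$ from a single violating index $\ell^* \in C \setminus J$ only requires tightening one coordinate of the template, which matches the coordinate-by-coordinate refinement in Algorithm \ref{algo:subdivision-cert}. The step I expect to require the most care is maintaining the strict inequality $p^s(x_k) < m_s$ needed for $k$ to enter the maximum defining $\chat^s$; this is handled by the chain above, whose built-in strict gap $(c')^s < p^s(x_{\ell^*})$ makes the argument go through. The trivial case $J = \emptyset$ is vacuous and needs no special handling.
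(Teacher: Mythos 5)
Your proof is correct and follows essentially the same route as the paper's: use Lemma \ref{lem:certificate-compatible} to obtain $C\nsubseteq J$, extract a coordinate $s$ along which $\max_{k\in J}p^s(x_k)<\max_{\ell\in C}p^s(x_\ell)$, and conclude $J\subseteq I_s$ from the definition of $\chat^s$. You additionally spell out requirement (a) of Definition \ref{defn:consistency} and the nonemptiness of the selected $I_s$, which the paper's terser argument leaves implicit.
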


\begin{proof}
Let $J\subseteq I$ be compatible.
Using that $C\nsubseteq J$ (Lemma \ref{lem:certificate-compatible}), let $s$ be a component such that $\max_{k\in J}\,p^s(x_k)<\max_{k\in C}\,p^s(x_k)$.
It holds that $J\subseteq I_s$.
Since $J$ was arbitrary, this concludes the proof.
\end{proof}

\section{Spatially Concentrated Infeasibility Certificates}\label{app:findcertificates-concentrated}
\subsection{Optimization program formulation}

Given a center point $\xbar$ and a non-compatible index set $I\subseteq\{1,\ldots,K\}$, we consider the following Linear Program: with variables $\lambda_k\in\Re$, $\forall\,k\in I$,
\begin{equation}\label{equat:dual-lp}
\begin{array}{@{}rl@{}}
\text{minimize} & \sum_{k\in I}\lvert\lambda_k\rvert\lVert x_k-\xbar\rVert^2 \\
\text{s.t.} & \sum_{k\in I} \lambda_k[x_k;[1]_1] = [0]_{d+1} \;\wedge\; \sum_{k\in I} \lambda_ky_k \ngeq -\sum_{k\in I} \lvert\lambda_k\rvert\epsilon. \\
\end{array}
\end{equation}
From the theorem of alternatives of Linear Programming, it holds that \eqref{equat:dual-lp} has a feasible solution $\{\lambda_k\}_{k\in I}$ satisfying that at most $d+2$ variables are nonzero.
The objective function of \eqref{equat:dual-lp} tends to put zero value to $\lambda_k$ whenever $\lVert x_k-\xbar\rVert_\infty$ is large.
This promotes proximity of the point $x_k$ to $\xbar$ when $\lambda_k\neq0$.%
\footnote{Note that $L^1$ regularization costs are often used in machine learning to induce sparsity of the optimal solution \citep[p.\ 304]{boyd2004convex}.
Here, we use a \emph{weighted} $L^1$ regularization cost to induce a sparsity pattern dictated by the geometry of the problem.}
In our experiments, we used $\xbar=\frac1{\lvert I\rvert}\sum_{k\in I}x_k$.

\subsection{Complexity analysis under strong assumptions}\label{ssec:complexity-analysis}

Consider the domain $D=[0,1]^d$ and let $f:D\to\Re^e$ by a PWA function with $q$ pieces, whose domains $H_1,\ldots,H_q\subseteq D$ are rectangles (e.g., the black rectangles in Figure \ref{fig:certificate-rectangle}).
Let $N\in\Ne_{>0}$ and consider the sampled input set $\calX\subseteq D$ obtained by griding uniformly $D$ with $N$ points along each axis (hence, $\lvert\calX\rvert=N^d$).
Now consider the data set $\calD=\{(x,f(x)):x\in\calX\}$.
We aim to solve Problem \ref{prob:PWAR} with data set $\calD$, $\epsilon=0$, $q$ as above and the rectangular template.
We will compare the efficiency of the naïve approach (outlined in the proof of Theorem \ref{thm:poly-datasize}) with the top-down approach presented in Algorithm \ref{algo:top-down}.
In particular, we will investigate the case $N\to\infty$.

\paragraph{Naïve approach}

The naïve approach consists in enumerating all subsets of $\calX$ that are compatible with the rectangular template.
There are $\big(\frac{N(N-1)}2\big)^d$ such subsets (choose a lower bound and an upper bound along each axis).
This gives a lower bound on the computational complexity of the naïve approach, that grows polynomially with $N$.

\paragraph{Top-down approach}

We let $\FindSubsets$ be implemented as in Algorithm \ref{algo:subdivision-cert} and we assume that, at each call, the associated infeasibility certificate $\calX_\infeas$ consists in $d+2$ points concentrated around the center $\xbar$ of $H(c)\cap\calX$.
See Figure \ref{fig:certificate-rectangle} for an illustration, where $H(c)$ is the blue rectangle, $\xbar$ is the blue dot and $\calX_\infeas$ is the red dots.
This assumption holds naturally when $H(c)\cap\calX$ contains a lot of points (which is the case when $N$ is large), and the certificate is computed using \eqref{equat:dual-lp}.
It follows that the parameters $\chat^1,\ldots,\chat^h$ computed by $\FindSubsets$ satisfy that for all $1\leq s\leq h$, $p^s(\xbar)\approx\chat^s$ or $\chat^s\approx c_i^s$, wherein $i\in\{1,\ldots,q\}$ is such that $\xbar\in H_i$ and $c_i$ is such that $H_i=H(c_i)$.
Hence, the rectangles $H(c_1),\ldots,H(c_h)$ computed by $\FindSubsets$ satisfy that for all $1\leq s\leq h$, either the volume of $H(c_s)$ is half of that of $H(c)$ (since one face is tight at $\xbar$, the center of $H(c)$) or the number of components $c_s^t$ of $c_s$ satisfying $c_s^t\leq c_i^t$, wherein $H_i$ contains the center of $H(c_s)$, is strictly larger than that of $c$.
By adding the natural assumption that all regions $H_i$ have a volume of at least $\nu\in(0,1]$ (user-provided) and discarding regions with volume smaller than $\nu$, we get that the algorithm cannot divide the volume of a region more than $-\log_2(\nu)$.
Hence, the depth of the tree underlying the algorithm is upper bounded by $h-\log_2(\nu)$.
Since, each node of the tree has at most $h$ children (the subsets given by $\FindSubsets$), the number of rectangles encountered during the algorithm is upper bounded by $h^{h-\log_2(\nu)}$.
Note that this upper bound on the complexity of the algorithm is independent of $N$.

\begin{figure}
\centering
\begin{tikzpicture}
\begin{axis}[
	width=0.5\textwidth,height=0.5\textwidth,
	xmin=-0.1,xmax=1.1,
	ymin=-0.1,ymax=1.1
]
\draw[black] (0,0) rectangle (0.7,0.5);
\draw[black] (0.7,0) rectangle (1,0.7);
\draw[black] (0,0.5) rectangle (0.4,1);
\draw[black] (0.4,0.5) rectangle (0.7,1);
\draw[black] (0.7,0.7) rectangle (1,1);
\draw[blue] (0.1,0.15) rectangle (0.6,0.55);
\fill[blue] (0.35,0.35) circle (2pt);
\fill[red] (0.35,0.365) circle (2pt);
\fill[red] (0.35,0.5) circle (2pt);
\end{axis}
\end{tikzpicture}
\caption{Illustration of the top-down approach in Subsection \ref{ssec:complexity-analysis}.}
\label{fig:certificate-rectangle}
\end{figure}
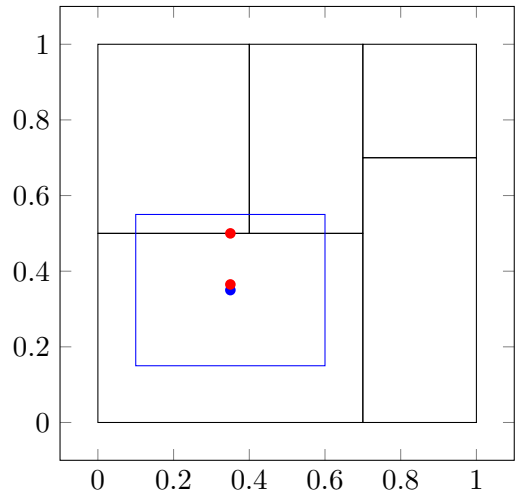

\section{Supplementary material}\label{app:supplementary-material}

\begin{figure}[h]
\centering
\includegraphics[height=6cm]{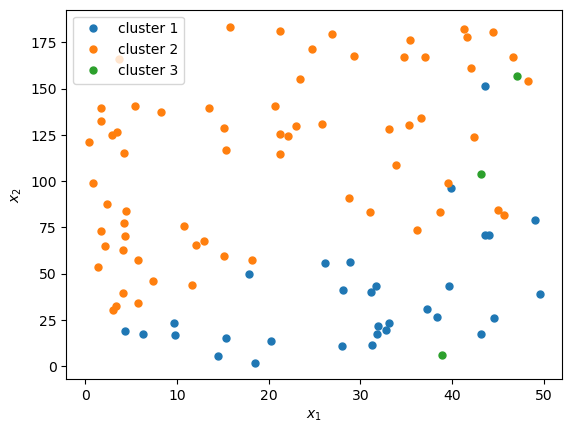}
\caption{Clusters of data points from SA regression of the data set from the insulin--glucose regulation system in Subsection \ref{subsection:exa-glucose}.}
\label{fig:dallaman-milp}
\end{figure}

\begin{figure}[h]
\centering
\includegraphics[height=6cm]{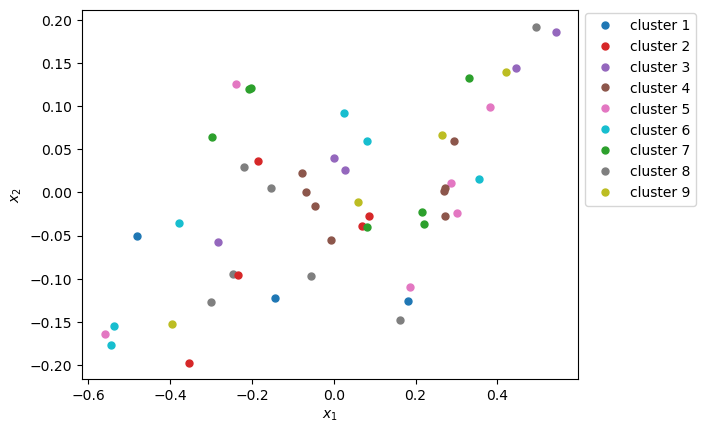}
\caption{Clusters of data points from SA regression of $51$ data points from the inverted double pendulum with soft contacts in Subsection \ref{subsection:exa-pendulum}.}
\label{fig:pendulum-milp}
\end{figure}

\end{document}